\newtheorem{theorem}{Theorem}[section]
\newtheorem{lemma}[theorem]{Lemma}
\newtheorem{conjecture}[theorem]{Conjecture}
\theoremstyle{definition}
\newtheorem{definition}[theorem]{Definition}
\theoremstyle{remark}
\numberwithin{equation}{section}
\begin{document}

\title{How sticky is the chaos/order boundary?}

\author{Carl P. Dettmann}
\address{School of Mathematics, University of Bristol, Bristol BS8 1TW, UK}

\email{Carl.Dettmann@bristol.ac.uk}

\subjclass[2010]{37J99, 11J70}

\date{\today}

\begin{abstract}
In dynamical systems with divided phase space, the vicinity of the boundary between regular and chaotic regions is
often ``sticky,'' that is, trapping orbits from the chaotic region for long times.  Here, we investigate the stickiness
in the simplest mushroom billiard, which has a smooth such boundary, but surprisingly subtle behaviour.  As a measure
of stickiness, we investigate $P(t)$, the probability of remaining in the mushroom cap for at least time $t$ given uniform
initial conditions in the chaotic part of the cap.  The stickiness is sensitively dependent on the radius of the stem $r$ via
the Diophantine properties of $\rho=(2/\pi)\arccos r$.  Almost all $\rho$ give rise to families of marginally unstable
periodic orbits (MUPOs) where $P(t)\sim C/t$, dominating the stickiness of the boundary. Here we consider the case
where $\rho$ is MUPO-free and has continued fraction expansion with bounded partial quotients.  We show that $t^2 P(t)$
is bounded, varying infinitely often between values whose ratio is at least $32/27$.  When $\rho$ has an eventually
periodic continued fraction expansion, that is, a quadratic irrational, $t^2 P(t)$ converges to a log-periodic function.
In general, we expect less regular behaviour, with upper and lower exponents lying between 1 and 2.  The results may
shed light on the parameter dependence of boundary stickiness in annular billiards and generic area preserving maps.
\end{abstract}

\maketitle

\section{Introduction}
Understanding Hamiltonian dynamics with mixed regular and chaotic phase space remains one of the most
important and intractable problems in dynamical systems, with many open questions.
Much of the difficulty of such systems is that often (and probably ``typically'' in many senses)
the boundary between regular and chaotic regions (however defined) is fractal.  This is true even for
well studied and visualised classes of systems such as two dimensional billiards, in which a point particle
(of unit mass and speed, without loss of generality) moves uniformly, making mirror-like reflections with
the boundary of a domain $D\subset\mathbb{R}^2$~\cite{CM06,Tabachnikov05}.  Dynamics in any
billiard has one natural, ``equilibrium'' invariant measure: For the flow (continuous time) dynamics it is
uniform (that is, proportional to Lebesgue) in position in the domain and in the direction of motion.  For the
map (from one collision to the next) it is uniform in arc length and the tangential component of velocity.

\begin{figure}
\centerline{\includegraphics[width=400pt]{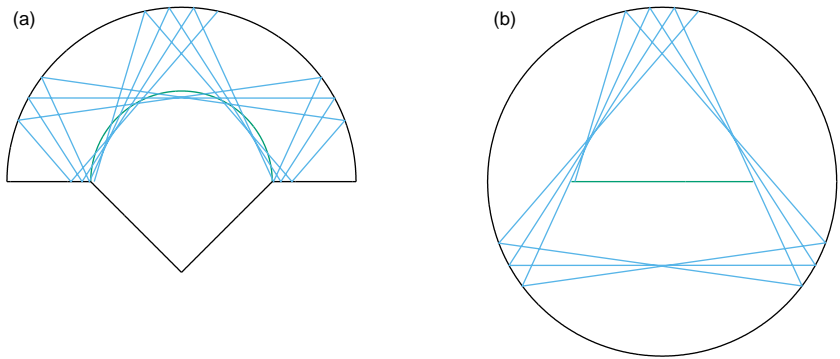}}
\vspace{-80pt}
\caption{(a) The mushroom geometry; the radius of the cap is $1$ and that of the stem is $r\in(0,1)$.
(b) Reduction of dynamics in the cap to a circular billiard, by reflecting across the $x$-axis.  The stem
becomes a slit-like hole of size $2r$ in the interior.  An orbit with rotation number close to $1/3$ is
shown in each case.}
\label{f:mush}
\end{figure}

Mushroom billiards were introduced by Bunimovich, as the first examples of billiards with sharply
divided regular and chaotic phase space~\cite{Bunimovich01}.  The simplest mushroom
consists of a semicircular cap (of unit radius, without loss of generality) and rectangular or triangular
symmetrically placed stem of radius $r\in(0,1)$; see Fig.~\ref{f:mush}.
Orbits with angular momentum (that is, distance of
closest approach to the centre) $l>r$ in the cap cannot reach the stem, and so form a regular
component with rotation number $\lambda/2$ where
\begin{equation}\label{e:lam}
\lambda=\frac{2}{\pi}\arccos l
\end{equation}
due to the integrability of the circle.  This region corresponds to $\lambda<\rho$ where
\begin{equation}\label{e:rho}
\rho=\frac{2}{\pi}\arccos r
\end{equation}
The factor $2$ in these definitions is needed to simplify the Diophantine approximation conditions;
see Ref.~\cite{DG11} and below.  The set of orbits that enter the stem forms the chaotic component, by
an application of the defocusing mechanism under which a large class of billiards with focusing boundary
components (including the well known stadium) can be shown to exhibit chaos~\cite{Bunimovich74}. In
fact, the mushroom limits to both the regular circle and ergodic semi-stadium in the limit of small and large
stem radius, respectively.  More complicated mushrooms may involve semi-ellipses and/or be constructed
to have an arbitrary number of regular and chaotic components~\cite{Bunimovich01,BV12}.

Later, it was observed that even in mushroom billiards, the simple structure of the phase space
may be complicated by the presence of marginally unstable periodic orbits (MUPOs) embedded in
the chaotic region~\cite{AMK05}.  As is common in the literature, we use the term MUPO to denote
the whole continuous family of periodic orbits with a given rational rotation number.  These are any periodic
orbit ($\lambda\in\mathbb{Q}$) restricted to cap of the mushroom (hence their marginal, ie parabolic nature)
but have $\lambda\geq\rho$ (hence located in the chaotic region or its boundary). When $\lambda$ is
perturbed, the orbit precesses for a long time but eventually falls
into the stem and into the main body of the chaotic region.  MUPOs are thus responsible for the phenomenon
of stickiness, the phenomenon in which chaotic orbits spend long periods in quasi-regular behaviour.  If the
mushroom stem has no periodic orbits entirely contained within it, such as the triangular stem in
Fig.~\ref{f:mush}, any MUPOs in the cap are the main source of stickiness. 

Note that there are a great variety of examples of stickiness in billiards, as discussed recently~\cite{BV12}.
``Internal'' stickiness is where there is no island of stability, such as the original stadium billiard, or where
stickiness is due to MUPOs completely contained
in the chaotic sea, such as the mushroom MUPOs above.   In contrast, ``external'' stickiness
is related to the boundary between chaotic and regular regions.  Ref.~\cite{BV12} gives
a number of examples of external stickiness, arguing that where the orbits in the regular
region are parabolic (as in mushrooms with circular caps) the island is typically not sticky, and
giving as a counterexample the case where the boundary corresponds to rational rotation
number (ie a MUPO). While MUPOs lead to stickiness located in the chaotic sea (internal stickiness),
the nature of mixed phase space also requires an understanding of the stickiness of the boundary between
the regular and chaotic regions (external stickiness).

Here we characterise stickiness in terms of an open billiard.  If initial conditions are distributed with respect to the
equilibrium invariant measure of the flow, restricted to the cap of a mushroom and to the chaotic region $\lambda>\rho$,
and the stem is replaced by a hole,  we can consider the survival probability $P(t)$, that the particle has not escaped
by time $t$.  A MUPO leads to a contribution to $P(t)$ proportional to $1/t$ as $t\to\infty$~\cite{DG11}.  That work
also gave a number of results (discussed below) about MUPOs in mushrooms, including characterising the set of $r$ for
which there are no MUPOs, including an explicit example, namely $r=\cos\left(\frac{5+\sqrt{2}}{23}\pi\right)\approx 0.640134$.

This paper is concerned with the stickiness of the boundary, which is weaker than stickiness due to MUPOs.  It
was conjectured in Ref.~\cite{DG11} that when there are no MUPOs the boundary would contribute $C/t^2$.
Subsequently Alastair Robertson's undergraduate project~\cite{Robertson13} with more careful numerics
for the above explicit $r$, suggested this form of decay, but with $C$ bounded but not convergent.
This calculation is replicated with a larger sample size of $10^{11}$ in Fig.\ref{f:mb64}.  

\begin{figure}
\centerline{\includegraphics[width=400pt]{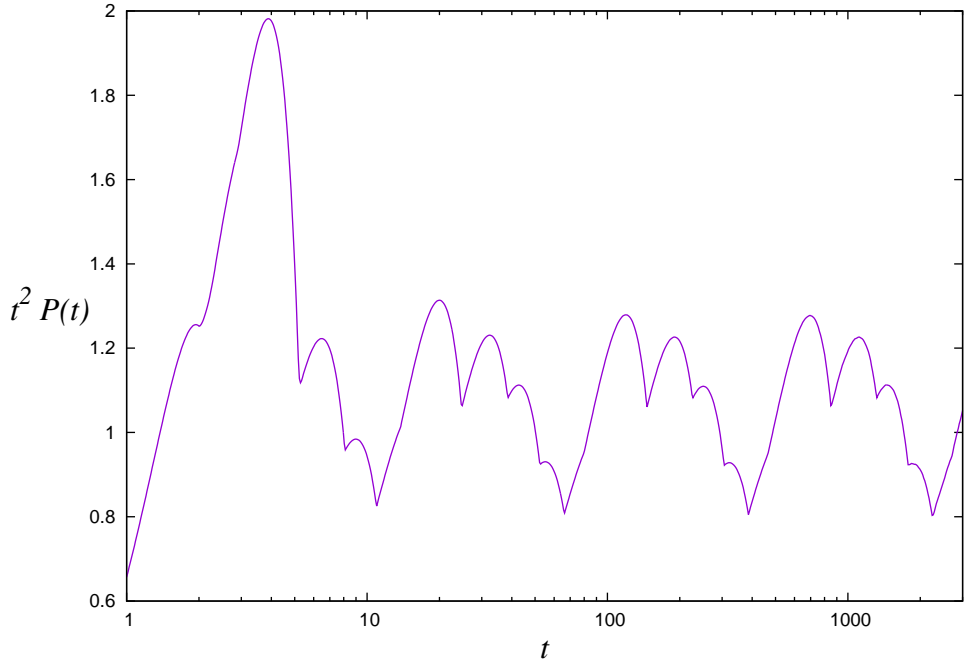}}
\vspace{-30pt}
\caption{In the MUPO-free case $r=\cos\left(\frac{5+\sqrt{2}}{23}\pi\right)\approx 0.640134$ the survival probability
limits to a log-periodic function divided by $t^2$, as shown by Thm.~\protect\ref{th:per}.}
\label{f:mb64}
\end{figure}

Here we find that not only the existence of MUPOs has an intricate parameter dependence, but also the boundary stickiness.
In particular, the stickiness depends sensitively on the Diophantine properties of $\rho$.
In this paper we give results when the partial quotients $a_n$ in the continued fraction expansion are bounded:

\begin{theorem}{Mushrooms with bounded partial quotients.}\label{th:finite}
Consider a MUPO-free mushroom for which $\rho$ has bounded partial quotients.  Then 
\begin{equation}
\frac{\limsup t^2P(t)}{\liminf t^2P(t)}\geq\frac{32}{27}\approx 1.185\ldots
\end{equation}
as $t\to\infty$, and in particular both limits are positive and finite.
\end{theorem}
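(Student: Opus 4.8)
The plan is to reduce the cap dynamics to a circle rotation with a shrinking hole, convert the survival probability into a sum over the gaps of the rotation orbit via the three-distance theorem, and then read off the oscillation of $t^2P(t)$ from the continued fraction of $\rho$. First I would unfold the reflections in the flat side: reflecting the cap across the diameter turns the flow into the billiard in the unit disc, on which the angular momentum $l$ (equivalently $\lambda=(2/\pi)\arccos l$) is conserved. Writing $\bar\psi_n$ for the angular position of the midpoint of the $n$th chord, the collision map acts by $\bar\psi_{n+1}=\bar\psi_n+\pi\lambda$, a rigid rotation, and a short trigonometric computation shows that the $n$th chord meets the slit $[-r,r]$ (so the particle escapes into the stem) exactly when $\bar\psi_n$ lies in $H_\delta=(-\delta,\delta)\cup(\pi-\delta,\pi+\delta)$ with $\delta=\arccos(l/r)$; near the boundary $\delta\sim\sqrt{\pi\tan(\tfrac{\pi\rho}{2})}\,\sqrt{\lambda-\rho}$. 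Quotienting by the half-turn $\bar\psi\mapsto\bar\psi+\pi$ and rescaling to unit circumference turns this into rotation by $\lambda\pmod 1$ with a single hole of half-width $\tilde\delta=\delta/\pi$ at the origin. Since the chord length tends to $2\sqrt{1-r^2}$ at the boundary, $t$ and the collision number $N$ are proportional, so it suffices to study $P$ as a function of $N$. I would use MUPO-freeness precisely to discard the rational $\lambda\ge\rho$ that would otherwise leave a non-escaping strip contributing $C/t$, so that the weaker boundary term dominates.

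For fixed $\lambda$ the set of surviving initial phases is the complement of the $N$ hole-preimages, so by the three-distance theorem its measure is $\sigma_N(\lambda)=\sum_j (g_j-2\tilde\delta)^+$, where $g_1,g_2,g_3=g_1+g_2$ are the three gap lengths of $\{n\lambda\bmod 1\}_{0\le n<N}$ taken with their three-distance multiplicities. Integrating $\sigma_N$ against the smooth, positive equilibrium density in $\lambda$ and localizing to the window $\lambda-\rho=O(N^{-2})$ where survival is at all possible, the substitution $\lambda=\rho+s/N^2$ reduces the $s$-integral of each $(g_j-2\tilde\delta)^+$ to an elementary cubic, and one obtains
\[
 t^2 P(t)\ \sim\ \mathrm{const}\cdot \Phi(N),\qquad \Phi(N):=N^2\sum_j g_j^{3}.
\]
Everything then rests on the cubic gap sum $\Phi(N)$. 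Since $\sum_j g_j=1$ over $N$ gaps, convexity of $x\mapsto x^3$ gives $\Phi(N)\ge 1$, while bounded partial quotients force the ratio $g_{\max}/g_{\min}$ (equivalently $q_{k+1}/q_k$) to stay bounded and hence $\Phi(N)\le\mathrm{const}$; both $\liminf$ and $\limsup$ are therefore positive and finite.

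To extract the ratio $\tfrac{32}{27}$ I would follow $\Phi$ across one renormalization cell $q_k\le N<q_{k+1}$, where the largest gaps are successively bisected as $N$ grows. In the self-similar golden case $\rho=(\sqrt5-1)/2$ the three gaps sit in the fixed ratio $\varphi^2:\varphi:1$, and parametrizing the cell by $u=N/q_k\in[1,\varphi]$ one finds $\Phi(N)\to c\,u^2(2-u)$. On $[1,\varphi]$ this equals $c$ at both endpoints and peaks at $u=4/3$ with value $\tfrac{32}{27}c$, so for the golden mean $\limsup/\liminf=32/27$ exactly. Since the golden mean is the most uniformly distributed irrational it minimizes the cell oscillation, and I would argue that every other bounded-quotient $\rho$ gives a strictly larger peak-to-endpoint ratio, yielding $\ge 32/27$ in general. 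The eventually periodic (quadratic irrational) case is then the special situation in which the continued fraction makes $\Phi$ exactly self-similar under a fixed scaling $N\mapsto (q_{k+P}/q_k)\,N$, producing the log-periodic limit described in the introduction.

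The hard part will be the step localizing to $\lambda-\rho=O(N^{-2})$: although these $\lambda$ share the convergents $p_k/q_k$ of $\rho$, they differ from $\rho$ by enough to shift the orbit points by an amount comparable to the hole width itself, so the gaps $g_j(\lambda)$ carry an order-one \emph{shear} in the scaling variable $s$ that cannot simply be frozen at $\lambda=\rho$. Controlling this $\lambda$-dependence---showing that it reorganizes into the same cubic scaling function without disturbing the extremal cell structure---is the delicate analytic point, and alongside it the combinatorial optimization certifying that the golden mean genuinely minimizes the oscillation ratio over all bounded continued fractions is the main obstacle I would expect to occupy most of the work.
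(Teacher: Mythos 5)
Your reduction to a circle rotation with a shrinking hole, the use of the three-gap theorem, and the identification of the cubic $u^2(2-u)$ on a golden-ratio cell (peaking at $u=4/3$ with value $32/27$ against endpoint value $1$) all line up with the paper, which reaches the same extremal configuration via its Lemma~\ref{l:32/27} (there the case $z=1$, $G=g$). But there are two genuine gaps. First, your central formula $t^2P(t)\sim c\,N^2\sum_j g_j^3$ is obtained by freezing the gaps at $\lambda=\rho$, and as you yourself note the differences $D_{k,i}(\lambda)=B_{k,i}\lambda-A_{k,i}$ drift by $O(B_{k,i}\cdot N^{-2})=O(N^{-1})$ across the window, i.e.\ by an order-one fraction of the gaps themselves; this shear changes the coefficients of the resulting cell profile, so the exact shape $u^2(2-u)$ (and hence any "exact $32/27$ for the golden mean" claim) is not justified. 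The paper sidesteps this entirely: it integrates the exact $P_\lambda(N)$ of Eq.~(\ref{e:P(N)}), keeping the full $\lambda$-dependence of $h$ and the $D_{k,i}$, and only replaces $\tau_\lambda$ by $\tau_\rho$ (an $O(t^{-3})$ error), which yields an \emph{exactly} piecewise linear $\tilde P(t)$ with breakpoints at $\tau_\rho B_{k,i}$ without ever needing the frozen-gap constants.

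Second, and more seriously, your closing step --- that the golden mean minimizes the cell oscillation and every other bounded-quotient $\rho$ gives a strictly larger peak-to-endpoint ratio --- is asserted rather than proved, and it is also not the statement you need: for a general $\rho$ many cells have breakpoint ratio $B_{k,i+1}/B_{k,i}<g$ (e.g.\ $B_{k,2}/B_{k,1}=1+B_{k-1}/(B_{k-1}+B_{k-2})$ can fall below $g$), and over such cells the oscillation can be smaller than $32/27$, so a global "golden is worst" comparison of rotation numbers would fail cell-by-cell. The paper instead proves (Lemma~\ref{l:bratg}) that \emph{every} irrational has infinitely many consecutive semiconvergent denominators with ratio at least $g$, and then proves an elementary optimization statement (Lemma~\ref{l:32/27}): any positive piecewise linear function with $t^{-2}$ asymptotics whose breakpoints satisfy this spacing infinitely often has $\limsup t^2f(t)/\liminf t^2f(t)\geq 32/27$, uniformly over the unknown slope/intercept data $z$. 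That decoupling --- one good cell infinitely often, plus a bound valid for all admissible linear pieces --- is what makes the argument close; your route would require both controlling the shear exactly and solving a global extremality problem over continued fractions, neither of which is supplied.
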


The main idea of the proof is to show that the graph of $P(t)$ is approximately piecewise linear with the ratio between nonsmooth abscissas
at least the golden ratio infinitely often, and that a linear piece with this ratio has $t^2P(t)$ varying by at least a factor of $32/27$.  In
the special case where the partial quotients are not only bounded but eventually periodic we have

\begin{theorem}{Mushrooms with eventually periodic partial quotients.}\label{th:per}
Consider a MUPO-free mushroom for which $\rho$ has eventually periodic partial quotients.  Then there is a constant $\beta>1$
so that the limit taken over integers $m$,
\begin{equation}
\lim_{m\to\infty} \beta^{2m}t^2 P(\beta^mt)
\end{equation}
converges for all $t>0$.
\end{theorem}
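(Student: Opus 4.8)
The plan is to build on the piecewise-linear description of $P(t)$ that underlies Theorem~\ref{th:finite}. There, $P$ is shown to coincide, up to a controlled error, with a piecewise-linear function $P_{\mathrm{mod}}$ whose nonsmooth abscissas $t_0<t_1<t_2<\cdots$ and whose slopes are explicit functions of the continued-fraction convergents $p_n/q_n$ of $\rho$. The step from level $n$ to level $n+1$ is governed entirely by the single partial quotient $a_{n+1}$, that is, by one iterate of the Gauss map on the tail of the expansion. I would first record this as a renormalization operator $R_a$ acting on the local data (breakpoint scale, slope), so that advancing a full period of length $k$ is the fixed composition $R:=R_{a_{n_0+k}}\circ\cdots\circ R_{a_{n_0+1}}$, independent of $n$ once $n\ge n_0$, where $n_0$ is the pre-period length.

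Next I would fix the scaling constant. Since $\rho$ is a quadratic irrational, $(a_n)$ is eventually periodic, and the product $M$ of the $k$ partial-quotient matrices over one period has a dominant eigenvalue $\mu>1$ with $q_{n+k}/q_n\to\mu$. I take $\beta:=\lim_{n\to\infty}t_{n+k}/t_n$, the factor by which the escape-time scale advances over one period; this limit exists by periodicity and is an explicit power of $\mu$ determined by the periodic block. Because $R$ no longer depends on $n$ in the periodic regime, the model profile is exactly reproduced under $t\mapsto\beta t$: for every $t$ with $\beta t\ge t_{n_0}$ one has the functional equation $P_{\mathrm{mod}}(\beta t)=\beta^{-2}P_{\mathrm{mod}}(t)$. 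Iterating, $\beta^{2m}t^2 P_{\mathrm{mod}}(\beta^m t)$ is independent of $m$ as soon as $\beta^m t\ge t_{n_0}$, hence equals a fixed log-$\beta$-periodic function $L(t)$.

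The remaining, and principal, task is to transfer this exact self-similarity of the model to the genuine $P$, that is, to show $\beta^{2m}t^2\bigl(P-P_{\mathrm{mod}}\bigr)(\beta^m t)\to 0$. Here the hypothesis enters decisively: bounded (indeed periodic) partial quotients force the denominators to grow geometrically, $q_n\asymp\mu^{n/k}$, and supply uniform constants in the underlying three-gap/Ostrowski estimates, so the remainders separating $P$ from its piecewise-linear model at level $n$ are relative corrections of size $O(1/q_n)$, decaying geometrically in $n$. Since the scale $\beta^m t$ corresponds to dominant level $n\approx mk$, and $P(\beta^m t)\asymp\beta^{-2m}t^{-2}$, the prefactor $\beta^{2m}t^2$ cancels the leading size while the relative error contributes a factor $O(\mu^{-m})\to 0$. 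I would organize this as a uniform estimate on the fundamental domain $t\in[1,\beta]$, showing the sequence $m\mapsto\beta^{2m}t^2 P(\beta^m t)$ is uniformly Cauchy there and hence converges to $L$; the functional equation for $L$ then propagates convergence to all $t>0$.

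I expect the main obstacle to be precisely this uniform error control: one must bound, uniformly in $t$ and geometrically in $m$, both the deviation of $P$ from genuine piecewise-linearity and the accumulated higher-order continued-fraction corrections, using the boundedness of the partial quotients to supply a single geometric rate. Everything else — the existence of $\beta$, the functional equation, and the log-periodicity of the limit — is then essentially formal.
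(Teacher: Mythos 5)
Your overall architecture matches the paper's: a piecewise-linear model for $P$, a scaling constant $\beta$ extracted from the dominant eigenvalue of the product of partial-quotient matrices over a period, and a transfer of self-similarity from the model to $P$ by geometric error control. However, there is a genuine gap at the step you dismiss as ``essentially formal'': the claim that the model satisfies the \emph{exact} functional equation $P_{\mathrm{mod}}(\beta t)=\beta^{-2}P_{\mathrm{mod}}(t)$, so that $\beta^{2m}t^2P_{\mathrm{mod}}(\beta^m t)$ is independent of $m$. This cannot hold for any model whose breakpoints are explicit functions of the convergent denominators: the breakpoints sit at $\tau_\rho B_{k,i}$ with $B_{k,i}\in\mathbb{Z}$, so the ratio of breakpoints one period apart is a ratio of integers, while $\beta$ is an irrational (quadratic) eigenvalue. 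One has only $B_{k+2p}/B_k=\beta\bigl(1+\mathcal{O}(\beta^{-k/p})\bigr)$ because of the subdominant eigenvalue $\beta^{-1}$, and likewise for the differences $D_{k,i}$ and hence for the $\lambda$-intervals entering the integral. Consequently $\beta^{2m}t^2\tilde{P}(\beta^m t)$ is not constant in $m$; it is at best Cauchy, and proving that is where the real work lies. In particular, for $t$ near a breakpoint the dominant index pair at scale $\beta^m t$ need not be the naive shift $(k+2mp,i)$ of the pair at scale $t$, and this case must be handled separately --- the paper does so using the convexity of $\tilde{P}$ and the derivative bound $|\tilde{P}'(t)|<C/t^3$ to show the mismatch contributes only $\mathcal{O}(\beta^{-2k/p})$ after rescaling.

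If you instead intend $P_{\mathrm{mod}}$ to be an idealized profile with exactly geometric breakpoints $t_0\beta^{n/k}$, the exact functional equation becomes available, but then you have introduced a second approximation layer (idealized model versus the actual convergent-based piecewise-linear function) whose error you have not estimated; quantifying it is equivalent to the $\mathcal{O}(\beta^{-k/p})$ analysis above. Either way, the double limit structure (first $m\to\infty$ at fixed depth $k$, then $k\to\infty$) and the near-breakpoint case analysis are the substance of the proof, not a formality, and your proposal as written does not supply them. The rest of your plan --- the choice of $\beta$ from the period matrix (the paper uses the product over $2p$ partial quotients to force unit determinant, so the eigenvalues are $\beta,\beta^{-1}$), the $\mathcal{O}(1/q_n)$ control of $P-\tilde{P}$, and the propagation of the limit by the functional equation --- is sound and consistent with the paper.
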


The even partial quotients of $\rho$ must be bounded for the mushroom to be MUPO-free, from previous studies (see Sec.~\ref{s:prev} below).
However there is no such constraint on the odd partial quotients; typical MUPO-free mushrooms (if a measure on the set of MUPO-free
$\rho$ can be naturally defined) would be expected to have unbounded odd $a_n$, and indeed these values could grow rapidly.
It would be interesting to extend the methods developed here to these cases.  For now, we present only the example
in Fig.~\ref{f:mb37}, also with a sample size of $10^{11}$.

\begin{figure}
\centerline{\includegraphics[width=400pt]{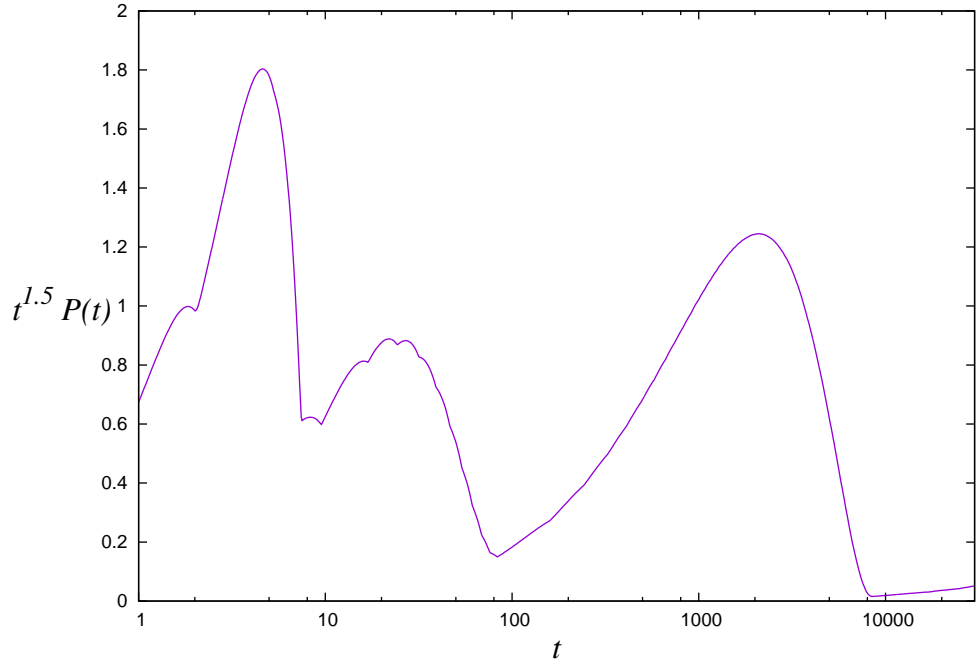}}
\vspace{-30pt}
\caption{A MUPO-free example constructed from the well-approximable $\rho=[0;1,3,10,1,100,1,1000,\ldots]$ which gives
$r\approx0.374600$.  An exponent for the survival probability $-\lim_{t\to\infty}\frac{\ln P(t)}{\ln t}$ does not appear to exist.}
\label{f:mb37}
\end{figure}

It appears that such behaviour may be arbitrarily close to that of MUPOs (see also Thm 1 in Ref.~\cite{D13} for a similar result in open chaotic
maps):
\begin{conjecture}{Liouville mushrooms.}
For sufficiently well-approximable MUPO-free $\rho$,
\begin{equation}
\limsup_{t\to\infty}\frac{-\ln P(t)}{\ln t}=1
\end{equation}
\end{conjecture}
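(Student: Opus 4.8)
\emph{Proof strategy.} The plan is to build on the near-MUPO decomposition of $P(t)$ that underlies Theorems~\ref{th:finite} and~\ref{th:per}, in which each continued-fraction convergent $p_n/q_n$ of $\rho$ contributes a family of marginally stable orbits, necessarily cut off in time because $\rho$ is MUPO-free. First I would write $P(t)$, up to a controlled error, as a superposition $\sum_n P_n(t)$ in which the $n$-th near-MUPO contributes a truncated inverse-linear profile: while the orbit precesses it behaves like $P_n(t)\sim C_n/t$, and it is switched off once the slowly precessing closest-approach distance crosses $r$ and the orbit drops into the stem. The lifetime $T_n$ of this window is set by the precession rate and hence by the quality of the approximation, $|\rho-p_n/q_n|\approx 1/(a_{n+1}q_n^2)$, so that $T_n$ grows with the partial quotient $a_{n+1}$ while the amplitude $C_n$ shrinks with $q_n$. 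A very large partial quotient thus produces a near-MUPO that is almost genuine, dominating $P(t)$ like $C_n/t$ over a long range of $t$; this is the quantitative form of the statement, foreshadowed in the excerpt and paralleling Thm.~1 of Ref.~\cite{D13}, that such behaviour approaches that of a true MUPO.

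The equality $\limsup_{t\to\infty}(-\ln P(t))/\ln t=1$ then splits into two inequalities. For $\limsup\ge 1$ I would exhibit a subsequence of transition times $t_k\to\infty$, lying between consecutive dominant windows, at which only one near-MUPO is appreciably active; since amplitudes are bounded above one gets $P(t_k)\le C/t_k$ there, so the ratio tends to $1$ (or more) along this subsequence. This direction needs the dominant windows to thin out rather than perpetually overlap, which the bounded even partial quotients of a MUPO-free $\rho$ help to guarantee. For $\limsup\le 1$ I would show that, once the odd partial quotients grow fast enough, the windows $[t_n^-,t_n^+]$ of the good convergents cover every sufficiently large $t$, so that at least one near-MUPO is always active with $C_n\ge t^{-\epsilon}$, forcing $P(t)\ge t^{-1-\epsilon}$. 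Since this lower bound gives $(-\ln P)/\ln t\le 1+\epsilon+o(1)$ for every $\epsilon>0$, the limit superior cannot exceed $1$. Making ``sufficiently well-approximable'' precise amounts to choosing a growth rate for $a_{n+1}$ that simultaneously secures the covering and leaves the thin transitions intact; the Liouville-type expansion of Fig.~\ref{f:mb37} is intended as the prototype.

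The main obstacle is the upper bound $\limsup\le 1$, and two points within it dominate the work. The first is controlling the decomposition $P=\sum_n P_n$ plus error: distinct near-MUPOs overlap in phase space and in time, intermediate (non-convergent) fractions also trap orbits, and one must verify that these corrections neither depress $P$ below $t^{-1-\epsilon}$ in the gaps between the leading windows nor inflate it enough to destroy the thin transitions used for the reverse inequality. The second is the delicate competition, inside each window, between the shrinking amplitude $C_n$ and the growing lifetime $T_n$: I must check that the amplitude stays above $t^{-\epsilon}$ throughout the interval on which convergent $n$ is meant to dominate, and that consecutive good windows abut or overlap with no polynomially large gap, since a single uncovered gap would admit $t^{-2}$-type decay and push the limit superior above $1$. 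Extracting from these estimates an explicit growth condition on the partial quotients is where the real difficulty lies; I expect the renormalization bookkeeping behind Theorem~\ref{th:per} to carry over, now applied to an expansion whose odd partial quotients are unbounded rather than eventually periodic.
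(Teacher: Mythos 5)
The statement you are addressing is a \emph{conjecture}: the paper offers no proof of it, only the numerical evidence of Fig.~\ref{f:mb37} and an analogy with Thm.~1 of Ref.~\cite{D13}. So there is nothing in the paper to compare your argument against, and your text is in any case a strategy outline with self-acknowledged open steps rather than a proof. Judged on its own merits, the outline engages with the right objects (the convergents of $\rho$, the competition between the amplitude and lifetime of near-MUPO windows), but one should note that in the MUPO-free setting all long-time survival comes from $\lambda$ in a shrinking neighbourhood of $\rho$ (the analogue of Lemma~\ref{l:interval}), so your ``near-MUPOs'' are not isolated rationals deep in the chaotic sea but the annuli $\{|D_{k-1}|<h(\lambda)\le |D_{k-2}|\}$ indexed by the continued fraction of $\rho$ itself; the superposition $\sum_n P_n$ is really just the decomposition of the single integral over $\lambda$ into these annuli.

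The concrete gap is in your covering argument for $\limsup\le 1$, and it is not a technicality. At times $t\asymp \tau_\rho B_k$ the surviving set is contained in $\{\lambda:h(\lambda)<|D_{k-1}|\}$: every orbit with $h>|D_{k-1}|$ has already escaped by collision $B_{k-1}+B_k$, while the next family does not begin to escape until about $2B_k+B_{k-1}$. Since $|D_{k-1}|\asymp 1/B_k$ and $h\asymp\sqrt{\lambda-\rho}$, that set has $\mu_\lambda$-measure $\asymp B_k^{-2}\asymp t^{-2}$, \emph{independently of how large the later partial quotients are}. Hence $P(t)\le Ct^{-2}$ along the subsequence $t\asymp B_k$, no matter how well-approximable $\rho$ is: making $a_{k+1}$ enormous lengthens the next window but cannot increase the measure of orbits available at its start. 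Your plan to choose the growth of the partial quotients so that windows with amplitude $\ge t^{-\epsilon}$ ``cover every sufficiently large $t$'' therefore cannot succeed, and the bound $P(t)\ge t^{-1-\epsilon}$ for \emph{all} large $t$ is false. (The same observation makes your lower-bound step too weak in the opposite direction: at these times one gets $P(t_k)\le Ct_k^{-2}$, not merely $\le C/t_k$.) What large odd partial quotients genuinely buy is that $P(t)$ is as \emph{large} as $t^{-1+o(1)}$ (or larger) somewhere inside each long window, i.e.\ the decay is MUPO-like along a subsequence; that is a statement about $\liminf_{t\to\infty}(-\ln P(t))/\ln t$, which is how the conjecture should be read if it is to be consistent with the paper's own general bound $P(t)\ge C_{\min}t^{-2}$ and with the dips just described. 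Any serious attempt should start by resolving which of the two one-sided exponents is actually being conjectured, and then quantify, inside a single window $(B_k,B_{k+1})$, both the peak $t^2P(t)\asymp a_{k+1}^2$ near $t\asymp B_{k+1}/2$ and the unavoidable return to $t^2P(t)\asymp 1$ at $t\asymp B_k$.
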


There are some other systems to which the present results could apply.
Mushrooms are of particular interest in quantum mechanics, in which the classical dynamics approximates the
small wavelength limit of the Schr\"odinger equation with (typically) Dirichlet conditions on the boundary.
Here, wave functions corresponding to different components of phase space, and tunnelling rates
between these components, can be observed numerically and
experimentally~\cite{BKLRVHKS08,BB07,DFMRS07,Gomes15,YWHL14}.
Thus, it would be of interest to know in what way the amount of classical stickiness affects the quantum mechanical
properties such as the energy level spacing.

A closely related class of billiards, also used in experiments, is that of the annulus, a circle with
a circular scatterer~\cite{DGHHRR00}.  In this case orbits with sufficiently large angular momentum
cannot reach the scatterer and form a regular component of phase space.  Orbits reaching an off-centre
scatterer may be chaotic or belong to elliptic islands.  It would be interesting to study boundary stickiness
in both classical and quantum annuli.

Finally, we return to generic area preserving maps.  Given the subtleties for the case of sharply divided phase space,
it is unsurprising that the generic mixed boundary has eluded detailed understanding for so long.  There are similarities between
the mushroom and generic cases, for example the importance of one-sided Diophantine approximation; see Ref.~\cite{AFM15}
for a recent study focusing on the H\'enon map, and references.  There also, the rotation number of the
invariant boundary circle depends intricately on the control parameter.  This suggests that there may be an exceptional
set of parameters in which the non-integer universal decay exponent as conjectured in Ref.~\cite{CK08} does not hold.

The outline of this paper is as follows: Section 2 summarises relevant previous results, section 3 develops the theory, culminating
in proofs of the two theorems in section 4, with more technical lemmas and their proofs relegated to section 5.  Sec.~\ref{s:incmupo}
demonstrates the possibility of an incipient MUPO not leading to stickiness, but contains a further conjecture that they are not found
in otherwise MUPO-free mushrooms.

\section*{Acknowledgements}
This paper is dedicated to the memory of Nikolai Chernov, whose life continues to inspire and enlighten all who play mathematical billiards.
Thanks to the organisers of his memorial conference in Birmingham AL, May 2015 for their kind hospitality.
The author is grateful to Alastair Robertson, whose undergraduate project report~\cite{Robertson13} included an early version of
Fig.~\ref{f:mb64}, providing the impetus for this work; also to Leonid Bunimovich and Orestis Georgiou for helpful discussions and to
user O.L. on math.stackexchange.org for assistance with the final integral in Eq.~(\ref{e:arccos}).  This work was supported by the
EPSRC [grant number EP/N002458/1].  Data underlying this work (used to illustrate, not assist, the proofs)
is available from the University of Bristol data repository at https://dx.doi.org/10.5523/bris.ujws3ienvh8q1arcsm7c4q7ew

\section{Previous results}\label{s:prev}
This section gives a brief summary of previous results on MUPOs in mushrooms, or lack thereof.
In mushrooms with circular caps, each MUPO (defined by coprime integers $1\leq A<B$) exists for a fixed interval of stem radii~\cite{AMK05,DG11}
\begin{equation}
\cos\left(\frac{\pi}{2}\frac{A}{B}\right)\leq r < \frac{\cos\left(\frac{\pi}{2}\frac{A}{B}\right)}{\cos\left(\frac{\pi}{2}\frac{1}{B}\right)}
\end{equation}
For a fixed
mushroom, the amount and type of stickiness depends on what MUPOs exist.  This investigation was initiated
by Altmann and coauthors~\cite{Altmann,AMK05,AFMKR08}, who showed that for both mushroom and annular billiards,
the MUPOs are related to the Diophantine properties of the relevant parameter, here $\rho$.  MUPOs appear when there is sufficiently
good one-sided approximation of the component boundary by orbits with rational rotation numbers,
for example corresponding to unbounded even partial quotients in the continued
fraction expansion of $\rho$.  Such behaviour is typical, so that a full measure of parameters have infinitely
many MUPOs.  Rational $\rho$ yield a positive finite number of MUPOs, including the one
exactly on the boundary.  Tsugawa and Aizawa have recently studied the Fibonacci
case, that is, the most extreme badly approximable $\rho$~\cite{TA14}.

Further results were provided by Dettmann and Georgiou~\cite{DG11}, who showed that
there were parameter values with no MUPOs, gave a method for finding them, and the explicit example
$r=\cos\left(\frac{5+\sqrt{2}}{23}\pi\right)\approx 0.640134$. The idea here is that quadratic irrationals
have periodic (and in particular bounded) continued fraction expansions; a finite number of other conditions
needed to be checked.  Periodicity is not necessary; any value with bounded even partial quotients and sufficiently
small stem, specifically $a_{max}=\sup a_{2n}$ and
\begin{equation}
r<\frac{1}{\sqrt{\pi^2(a_{max}+2)^2+1}}
\end{equation}
will have finitely many MUPOs, and usually zero (can be proved for specific $r$ by checking a finite number of conditions) as in the
example in Fig.~\ref{f:mb37}.

Ref.~\cite{DG11} also showed that the supremum of
MUPO-free radii is $\frac{1}{\sqrt{2}}\approx 0.707107$, and that the supremum of finitely many MUPOs for
irrational rotation number is $\frac{4}{\sqrt{16+\pi^2}}\approx 0.786439$.  Bunimovich has recently
provided an alternative characterisation of the MUPO-free parameter set, including a number of theorems,
bounds on MUPO numbers and method for finding MUPO-free mushrooms~\cite{Bunimovich14}.  

\section{Development of the theory}
\subsection{From the mushroom to circle rotations}
The mushroom geometry we consider is shown on the left in Fig.~\ref{f:mush}.  The cap has radius $1$ while the stem has
radius $r\in(0,1)$ and has a polygonal geometry that contains no periodic orbits entirely within it.  An example with periodic
orbits would be a rectangle, with horizontal period two orbits.  Any such periodic orbits would be MUPOs and lead to
stickiness in the chaotic region, since all orbits of polygonal billiards are parabolic.  Apart from the no-MUPO constraint,
the stem geometry is arbitrary, and not relevant to the analysis below.  Note that many other mushroom geometries
can be constructed with different and interesting properties~\cite{BV12}.

Using the usual reflection trick, any orbit which hits only the cap is equivalent to an orbit in the circle obtained by reflecting
the cap in its straight sides (right of Fig.~\ref{f:mush}).  The stem then becomes a slit in the interior of the circle.
A MUPO is any periodic orbit which remains in the cap, but which intersects the circle of radius $r$ shown on the left of
Fig.~\ref{f:mush}.  A MUPO orbit may be rotated around the circle (showing that these come in continuous families of orbits),
however a small perturbation which changes its rotation number causes it to precess, and eventually reach the stem,
as demonstrated for the orbit shown, which is near a period three MUPO. 

We then use the 2-fold rotational symmetry of the circle and hole, and identify opposite points.  The collision map then
corresponds to a circle rotation
\begin{equation}\label{e:rot}
\Phi(x)= \{x+\lambda\}
\end{equation}
where $\pi x$ is arc length and $\{\}$ denotes fractional part (ie mod 1).  The slit corresponds to a hole (single due to the symmetry reduction) in the $x$
dynamics of size
\begin{equation}\label{e:h}
h=\frac{2}{\pi}\arccos\frac{l}{r}=\sqrt{\frac{4}{\pi}\tan\frac{\pi\rho}{2}}(\lambda-\rho)^{1/2}+\mathcal{O}(\lambda-\rho)^{3/2}
\end{equation}
where (recalling Eqs.~\ref{e:lam},~\ref{e:rho})
\begin{eqnarray}
l=\cos\frac{\pi\lambda}{2}\\
r=\cos\frac{\pi\rho}{2}
\end{eqnarray}
We consider only the chaotic part of the mushroom cap $\rho<\lambda<1$.  In continuous time, the collisions occur at intervals
\begin{equation}
\tau_\lambda=2\sqrt{1-l^2}=2\sin\frac{\pi\lambda}{2}
\end{equation}

Here, we are interested in mushrooms without MUPOs.  As shown in Refs.~\cite{Altmann,DG11,Bunimovich14} this
corresponds to a set of $r$ with zero Lebesgue measure with a supremum at $r=1/\sqrt{2}$, described using the
Diophantine properties of $\rho$.

\subsection{Continued fractions and the three gap theorem}\label{s:3gt}
Dynamics in a circle is described by the three gap theorem~\cite{Slater67,Sos58,Ravenstein88}; that is, the set
$\{\Phi^i(x)\}_{i=0\ldots N-1}$ has gaps between adjacent points on the circle of at most
three different sizes, which are determined by $N$ and the continued fraction expansion of
$\lambda$.  We follow the first chapter of Ref.~\cite{RS92}, extending the notation to the
``semiconvergents'': Given a standard continued
fraction for $0<\lambda=a_0+1/(a_1+1/(a_2+\ldots))=[a_0;a_1a_2\ldots]$ containing
partial quotients $a_j\in\mathbb{Z}$, with $a_0\geq 0$ (here $a_0=0$) and all other
$a_j\geq 1$ we define $B_{-1}=0$, $A_{-1}=B_0=1$, $A_0=a_0$, $A_k=A_{k,a_k}$
and $B_k=B_{k,a_k}$ with
\begin{eqnarray}
A_{k,i}&=&iA_{k-1}+A_{k-2}\\
B_{k,i}&=&iB_{k-1}+B_{k-2}\label{e:Brec}
\end{eqnarray}
for $k\geq 1$ and $1\leq i\leq a_k$.  Then $A_k/B_k$ are the convergents (closest approximants)
to $\lambda$ and $A_{k,i}/B_{k,i}=[a_0;a_1a_2\ldots a_{k-1}i]$ are semiconvergents.  
The complete quotients are defined as $\zeta_k=[a_{k};a_{k+1}a_{k+2}\ldots]$ and hence lie
between $a_k$ and $a_k+1$.  In terms of these we have for any $k\geq 0$,
\begin{equation}
\lambda=\frac{A_k\zeta_{k+1}+A_{k-1}}{B_k\zeta_{k+1}+B_{k-1}}
\end{equation}
The differences are $D_k=D_{k,a_k}$ with
\begin{eqnarray}
D_{k,i}&=&B_{k,i}\lambda-A_{k,i}\nonumber\\
&=&iD_{k-1}+D_{k-2}\label{e:diff}\\
&=&(-1)^k\frac{(a_k-i)\zeta_{k+1}+1}{B_k\zeta_{k+1}+B_{k-1}}\nonumber
\end{eqnarray}
using the relation $A_kB_{k-1}-A_{k-1}B_k=(-1)^{k+1}$.  Thus we can determine the sign
and bound the magnitude of the differences.

These differences are plotted in Fig.~\ref{f:D}, a union of straight line segments labelled by coprime postive integers
$(A,B)$. Each has $x$-intercept $A/B$ and gradient $B$.  The final partial quotient of a continued
fraction expansion of $A/B$ is $a_k=i$.  Each straight line segment corresponds to two $(k,i)$, even $k$ for $D>0$,
odd $k$ for $D<0$, and $i=1$ for the larger $k$, corresponding to the two continued fraction representations of
the rational $A/B$.  Each segment extends to values of $\lambda$ equal to the truncated continued fractions
(Stern-Brocot parents) of $A/B$.  

\begin{figure}
\centerline{\includegraphics[width=400pt]{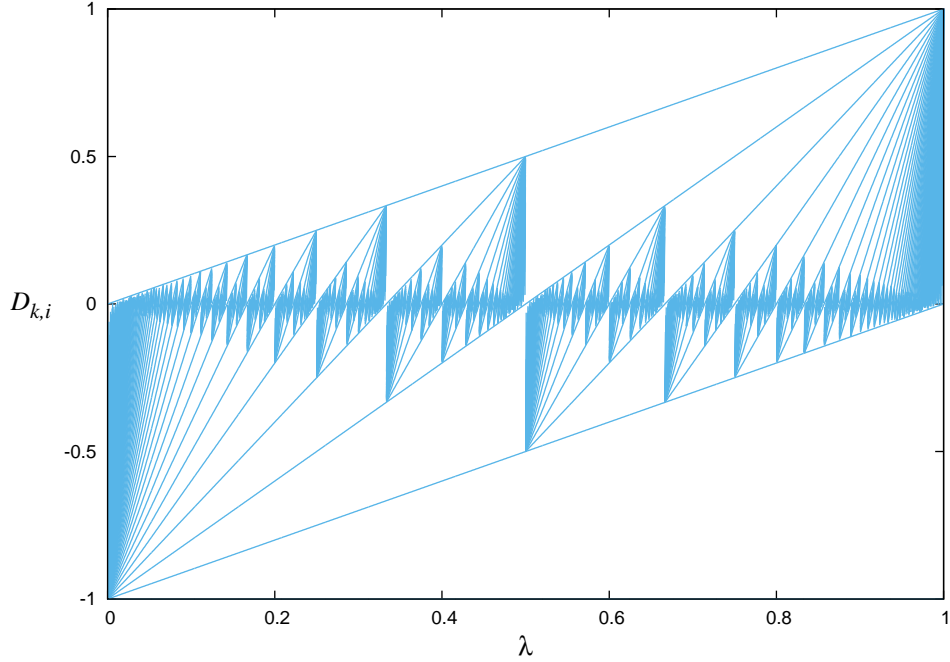}}
\caption{The differences $D_{k,i}$ as a function of $\lambda$. }
\label{f:D}
\end{figure}

In our notation, the three gap theorem says that we need to find the largest $B_{k,i}\leq N$.
Then there are $N-B_{k-1}$ gaps of size $|D_{k-1}|$, $N-B_{k,i}$ gaps of size $|D_{k,i}|$
and the remaining $B_{k-1}+B_{k,i}-N$ (possibly zero) gaps are the sum of the previous two
sizes, which comes to $|D_{k,i-1}|$ ($=|D_{k-2}|$ if $i=1$).

\subsection{Variation of $\lambda$}\label{s:incmupo}
The three gap theorem describes the gaps at fixed $\lambda$, while the initial conditions
for the survival probability are at all $\lambda>\rho$.  So, we need to describe how the variation
in $\lambda$ affects the relevant part of the continued fraction expansion, and the length of
the relevant part, namely the $(k,i)$ that we need to consider.

Consider the dynamics at fixed $\lambda$.
Initially, all gaps are larger than $h$.  Best approximation is found for the
full convergents, so the first difference to decrease below $h$ must be from a convergent,
say $|D_{k-1}|$ with $k$ now determined by $h$.  The second will be in the next sequence
of semiconvergents, $|D_{k,i}|$, possibly the next convergent (if $i=a_k$).  Now, only the
largest of the three gaps, namely $|D_{k,i-1}|$ remains greater than $h$.  Finally, the next
difference, which is the difference of these two, $|D_{k,i+1}|$ or $|D_{k+1,1}|$ will also fall
below $h$, leading to complete escape.  Thus we have defined a specific $(k,i)$ for each
$\lambda$ in a given mushroom (parametrised by $\rho$).

Now, the condition for a MUPO can be expressed simply in terms of these quantities:
If there is a $\lambda=A/B$ such that $0\leq h<1/B$ the orbit is periodic and can avoid the hole,
so we have a MUPO.  In this case, all the differences $D_{k,i}$ are multiples of $1/B$ and so
greater than $h$.  Conversely, if there is at least one nonzero difference less than $h$ at all
$\lambda\geq \rho$, then no MUPO exists.  See Fig.~\ref{f:Dh}.

\begin{figure}
\centerline{\includegraphics[width=400pt]{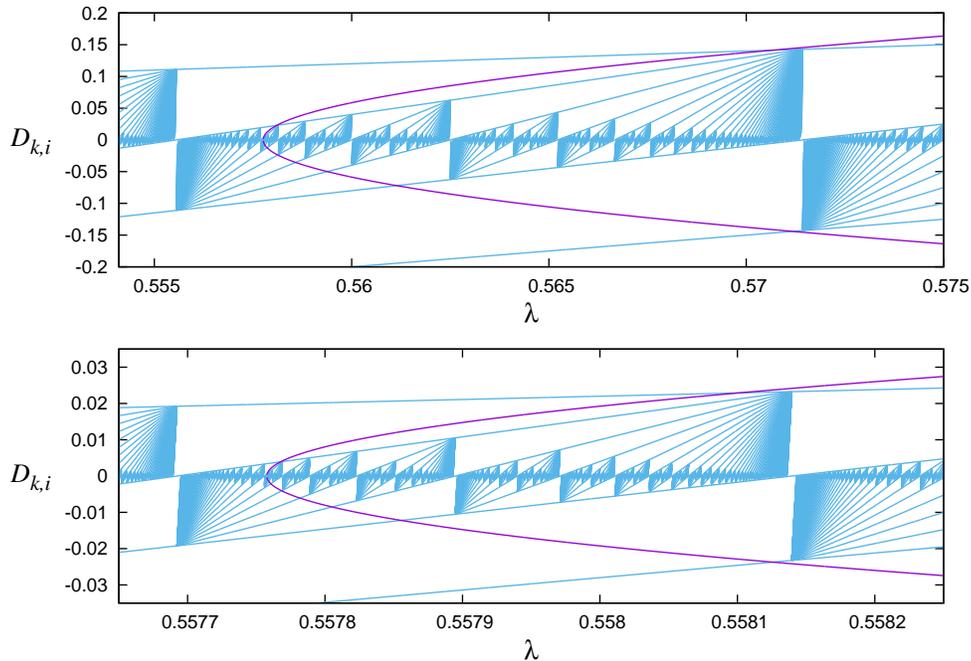}}
\caption{Close-ups of Fig.~\protect\ref{f:D} showing $\pm h$ (the hole size) for the known MUPO-free case
$r=\cos\left(\frac{5+\sqrt{2}}{23}\pi\right)$ as the parabola-like curve.  Note that this value of $r$ barely
avoids a MUPO at $\lambda=4/7=0.571\ldots $, since $h$ is only slightly greater than $1/7$ at that point. Also,
due to the periodicity of the continued fraction expansion, the behaviour near the intersection
with the $x$-axis is self-similar.}
\label{f:Dh}
\end{figure}

Transitions in the continued fraction expansion itself take place at rational $\lambda$.  However,
there is at least one non-zero difference less than $h$, which we can identify as $D_{k-1}$.  The next
to fall below $h$, $D_{k,i}$ is a negative multiple of this at the rational point, so also non-zero.  Thus
none of the quantities needed to define these differences, $k$, $i$ or $a_{k-1}$ have changed across
the transition.  Furthermore, as shown in Lemma~\ref{l:mon} the value of $k$ can only decrease
as $\lambda$ increases.  Together, these show that variations in the continued fraction expansion in
$\lambda$ are never relevant; the survival probability $P(t)$ may be calculated using the (fixed)
expansion of $\rho$.

Transitions in $(k,i)$ as $\lambda$ varies must therefore correspond only to cases where
$|D_{k,i}|=h$ and/or $|D_{k-1}|=h$.  Note that
\begin{itemize}
\item $D_{k,i}$ and $h$ increase with $\lambda$.
\item $D_{k,i}$ is linear and $h$ concave in $\lambda$.  
\item $D_{k,i}>0$ iff $k$ is even (see Eq.~\ref{e:diff}).
\end{itemize}

The second point shows that $|D_{k,i}|=h$ can have at most two solutions.  Actually, as shown in
Lemma~\ref{l:mon}, only the solution where $h$ is increasing faster than $D_{k,i}$ is relevant,
since the other possibility leads to the existence of a MUPO.

\begin{definition}{Incipient MUPO.}
Transition of both types occurring simultaneously, that is, $|D_{k,i}|=|D_{k-1}|=h$.
\end{definition}
At the relevant $\lambda$, the hole size is exactly the same size as both relevant differences.  Increasing
$\rho$ by an arbitrarily small amount leads to a MUPO, hence the terminology.  At the transition point
itself, all orbits
at and near this value of $\lambda$ escape in finite time, so there is no effect on the long time
survival probability.  In order to satisfy the equation at some rational rotation number $A/B$ , we have
\begin{equation}
r=\frac{\cos\frac{\pi}{2}\frac{A}{B}}{\cos\frac{\pi}{2}\frac{1}{B}}
\end{equation}

\begin{conjecture}{No incipient MUPOs without MUPOs.}
The above equation for $r$ is never satisfied without the existence of another (real) MUPO.
\end{conjecture}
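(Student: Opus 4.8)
The plan is to recast the statement as a covering problem for the MUPO intervals and then to locate a genuine MUPO among the rational rotation numbers that approximate $\rho$ from above. Writing $f(A,B)=\cos(\frac{\pi A}{2B})/\cos(\frac{\pi}{2B})$ for the upper endpoint of the half-open interval $I_{A,B}$ on which the $(A,B)$-MUPO exists, the incipient condition at a coprime pair $(A,B)$ is exactly $r=f(A,B)$; note $A\geq 2$, since $f(1,B)=1$ forces the degenerate $r=1$. A second, real MUPO at some coprime $(A',B')\neq(A,B)$ exists precisely when $r\in I_{A',B'}$, and because $A'/B'\geq\rho$ already forces $\cos(\frac{\pi A'}{2B'})\leq r$ (the left endpoint), the entire problem reduces to exhibiting a coprime pair with $A'/B'\geq\rho$ and $f(A',B')>f(A,B)$. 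Equivalently, I would show that $f(A,B)$ is never the supremum of $f$ over the rationals lying above $\rho$.

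First I would record the structural meaning of the incipient balance $|D_{k,i}|=|D_{k-1}|=h$. Since these differences have opposite signs, $D_{k,i}=-D_{k-1}$, and the recursion~(\ref{e:diff}) gives $D_{k,i+1}=D_{k,i}+D_{k-1}=0$; hence the incipient rotation number $A/B=A_{k,i+1}/B_{k,i+1}$ is itself a (semi)convergent of $\rho$ lying just above $\rho$, sitting exactly at the MUPO threshold. Expanding~(\ref{e:h}) then yields the clean criterion that a rational $A'/B'\geq\rho$ is a MUPO iff
\begin{equation}
0\leq \frac{A'}{B'}-\rho< \frac{C}{B'^{2}},\qquad C=\frac{\pi}{4}\cot\frac{\pi\rho}{2}=\frac{\pi r}{4\sqrt{1-r^{2}}},
\end{equation}
while the incipient pair satisfies $A/B-\rho\approx C/B^{2}$. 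The search then splits into two regimes. In the far regime the pair $(1,2)$, whose interval is $[1/\sqrt2,1)$, already settles every incipient $r\geq 1/\sqrt2$, so one may assume $r<1/\sqrt2$ and reduce to candidates approaching $\rho$ from above. There the criterion above turns the problem into one of one-sided Diophantine approximation: one needs a convergent or semiconvergent of $\rho$ from above whose error beats the explicit constant $C$, i.e.\ some partial quotient $a_{n+1}\gtrsim 1/C$ at a position of the correct parity.

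The main obstacle is that $\rho$ is not free: it is pinned to $(A,B)$ by $\cos\frac{\pi\rho}{2}\cos\frac{\pi}{2B}=\cos\frac{\pi A}{2B}$, so $\rho=\frac2\pi\arccos(\text{algebraic})$ is a specific value whose continued fraction we cannot prescribe. The incipient condition forces $A/B$ to be a threshold convergent from above, but it constrains only the partial quotient immediately following $A/B$; whether a \emph{later} approximant from above becomes a genuine MUPO is decided by the subsequent partial quotients, over which the incipient equation gives no control. Ruling out the coincidence ``incipient at $A/B$ yet badly approximable from above beyond $C$'' is therefore a genuine Diophantine question about $\arccos$ of algebraic numbers, and I expect it to require a transcendence input such as a Baker-type lower bound for the relevant linear form in logarithms, used to show that exact threshold at one convergent forces a sufficiently good approximant at a nearby one.

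For the quadratic-irrational subcase the situation is more tractable, and I would treat it first as a model. There the continued fraction of $\rho$ is eventually periodic, so the ``from above'' approximants recur with a fixed block of partial quotients and, by the self-similarity underlying Theorem~\ref{th:per} (and visible in Fig.~\ref{f:Dh}), the threshold picture reproduces itself under the renewal rescaling $\lambda-\rho\mapsto\beta^{-2}(\lambda-\rho)$ per period. One can then decide the conjecture for such $\rho$ by a finite computation on a single period: either the periodic block contains a partial quotient exceeding $1/C$ at the correct parity, producing infinitely many MUPOs, or it does not, in which case the exact endpoint inequality $f(A',B')>f(A,B)$ must be verified on the finitely many orbits of the renewal. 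Establishing that this finite check never fails, and then lifting the argument from quadratic irrationals to all algebraic-angle $\rho$ produced by the incipient equation, is the step I expect to be hardest and where the transcendence estimate would have to enter.
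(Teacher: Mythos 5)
The statement you are trying to prove is a \emph{conjecture} in the paper: the author offers no proof, only a heuristic (the incipient-MUPO equation has countably many solutions $r$, while the MUPO-free parameter set has measure zero, so one expects no overlap; in particular there is no reason for $\rho=\frac{2}{\pi}\arccos r$ with $r=\cos(\frac{\pi A}{2B})/\cos(\frac{\pi}{2B})$ to have bounded even partial quotients) together with a numerical search. Your proposal does not close this gap either. Its structural setup is sound and consistent with the paper --- the reduction to one-sided approximation of $\rho$ from above with the explicit threshold constant $C=\frac{\pi}{4}\cot\frac{\pi\rho}{2}$, and the observation that the incipient rational sits exactly at $h=1/B$ --- but this merely restates the difficulty in Diophantine language. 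The decisive step, ruling out the coincidence that $\rho$ is simultaneously pinned to an exact threshold at $A/B$ and badly approximable from above everywhere else, is explicitly deferred to an unproven transcendence input (a Baker-type bound for linear forms in logarithms applied to $\arccos$ of an algebraic number), and no such result is currently known in the form you would need.

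A more concrete problem is your proposed ``model case.'' The incipient equation forces $r$ to be algebraic, but $\rho=\frac{2}{\pi}\arccos r$ is then $\frac{1}{\pi}$ times the argument of an algebraic point on the unit circle; there is no mechanism making this a quadratic irrational, and generically one expects it to be transcendental with unconstrained partial quotients. So the quadratic-irrational subcase you would ``treat first'' is very likely empty, and the finite-computation-per-period argument has nothing to act on. (The paper's quadratic-irrational examples such as $\rho=\frac{2(5+\sqrt2)}{23}$ arise by \emph{choosing} $\rho$ quadratic and setting $r=\cos\frac{\pi\rho}{2}$, which is the opposite direction from the incipient equation.) What remains of your plan after removing the model case and the transcendence placeholder is essentially the paper's own heuristic, so the proposal should be regarded as a reformulation of the conjecture rather than progress toward a proof.
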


The equation for an incipient MUPO has a countable number of solutions, while the condition for the 
non-existence of a MUPO has (from previous studies) a zero measure set of solutions.  Thus, a
probabilistic argument suggests there is no overlap.  In particular, there is no reason to expect the continued
fraction expansion of $\frac{2}{\pi}\arccos r$ to have bounded even partial quotients, which is a
necessary condition for being MUPO-free.  A numerical search does not find any solutions,
although there are examples for which the smallest MUPO is rather long.   One of the simplest is
$\lambda=\frac34$, $h=\frac14$, $l=\cos \frac{3\pi}{8}=\frac{\sqrt{2-\sqrt{2}}}{2}$, $r=\frac{l}{\cos(\pi/8)}=\sqrt{2}-1$.  In
this case, the shortest MUPO has $A/B=1181/1622$.

\subsection{Survival probability at fixed $\lambda$}
Now let us calculate the survival probability, that is, the measure of surviving orbits in the
presence of the hole of size $h$.  For now, the initial conditions have fixed $l$ and hence
$\lambda$ and $h$ but are otherwise uniformly distributed round the circle. Recall that
the condition for a MUPO is that there is a $\lambda=A/B$ for which $h<1/B$.

Putting the above pieces together, we find the survival probability as a function of collisions:
\begin{equation}\label{e:P(N)}
P_\lambda(N)=\left\{\begin{array}{cc}
1-Nh&N\leq B_{k-1}\\
1-B_{k-1}h-(N-B_{k-1})|D_{k-1}|&B_{k-1}<N\leq B_{k,i}\\
(B_{k-1}+B_{k,i}-N)(|D_{k,i-1}|-h)&B_{k,i}<N\leq B_{k-1}+B_{k,i}\\
0&N> B_{k-1}+B_{k,i}
\end{array}\right.
\end{equation}
Note that for moderate $h$, it is quite possible for the most of the particles to remain for a long time if $a_k$ is large and hence $i$ and $B_{k,i}$
can be large.

\subsection{Integrating over $\lambda$}
We place initial conditions uniformly in the semicircular cap of the mushroom with uniform directions; this corresponds to the equilibrium invariant measure of the
billiard flow: See for example Ref.~\cite{CM06}.  Integrating over an arbitrary function of $\lambda$, we have
\begin{eqnarray}
\int f(\lambda)d\mu&=&\int_0^{2\pi}\frac{d\phi}{2\pi}\int_0^\pi \frac{d\theta}{\pi}\int_0^1 2r dr f\left(\frac{2}{\pi}\arccos|r\sin(\theta-\phi)|\right)\nonumber\\
&=&\int_0^1 f(\lambda)2\sin^2\left(\frac{\pi}{2}\lambda\right)d\lambda\\
&=&\int f(\lambda)d\mu_\lambda\nonumber
\end{eqnarray}
giving explicitly the associated measure $d\mu_\lambda$ for $\lambda\in (0,1)\subset \mathbb{R}$. Here, $r$ is radial distance, $\theta$ gives the angular position, $\phi$ gives the
direction of the particle, hence the angular momentum is $l=|r\sin(\theta-\phi)|$.

We further restrict to the chaotic region, $l<r$ or equivalently $\lambda>\rho$, which requires normalisation by a further constant
\begin{equation}
c_\rho=\int_\rho^1 2\sin^2\left(\frac{\pi}{2}\lambda\right)d\lambda=1-\rho+\frac{2}{\pi}\sin\frac{\pi \rho}{2}\cos\frac{\pi\rho}{2}
\end{equation}

Thus, the full survival probability is
\begin{equation}
P(t)=c_\rho^{-1}\int  P_\lambda\left(\lfloor\frac{t}{\tau_\lambda}\rfloor\right)d\mu_\lambda
\end{equation}
which in principle can be evaluated exactly at fixed $t$, splitting $\lambda$ into regions with differing $N=\lfloor t/\tau_\lambda \rfloor$ (if applicable), and into
the different regimes of Eq.~(\ref{e:P(N)}), and noting the integrals
\begin{eqnarray}
\int d\mu_\lambda&=&\lambda-\frac{2}{\pi}l\sqrt{1-l^2}+C\nonumber\\
\int hd\mu_\lambda&=&-\int\left(\frac{2}{\pi}\arccos\frac{l}{r}\right)\left(\frac{4}{\pi}\sqrt{1-l^2}\right)dl\nonumber\\
&=&\frac{2}{\pi^2}\left[{\rm Li}_2\left(r^{-1}e^{i\pi(\lambda+h)/2}\right)+{\rm Li}_2\left(r^{-1}e^{-i\pi(\lambda+h)/2}\right)\right.\nonumber\\
&&-\frac{\pi^2}{4}(h^2-2h-2\lambda)-\pi lh\sqrt{1-l^2}\label{e:arccos}\\
&&\left.+(1-r^2)\ln\left(\sqrt{1-l^2}+\sqrt{r^2-l^2}\right)+\sqrt{1-l^2}\sqrt{r^2-l^2}\right]+C\nonumber
\end{eqnarray}
where ${\rm Li}_2$ is the dilogarithm and $h$ and $l$ are the usual functions of $\lambda$.  We do not need the explicit form of $P(t)$ for any of our
results, however. 

\section{Proofs of the theorems}
Sec.~\ref{s:bounds} shows that the positive finite limits in Thm.~\ref{th:finite}.  Sec.~\ref{s:pwl} uses the almost piecewise linearity of $P(t)$ to get the explicit bound,
completing the proof of Thm.~\ref{th:finite}. Sec.~\ref{s:per} contains the proof of Thm.~\ref{th:per}.

\subsection{Bounds on the survival probability}\label{s:bounds}
From this point we assume that in addition to the MUPO-free condition, the partial quotients of $\rho$ are bounded.  Possible effects of violating this condition were
discussed briefly in the introduction.  We have Lemma~\ref{l:interval} which considers the interval of relevant $\lambda$ for large $t$.  In particular, there is
a function $\lambda_{max}(t)$ satisfying
\begin{equation}\label{e:t2}
\lambda_{max}(t)=\rho+\mathcal{O}(t^{-2})
\end{equation}
at large $t$ so that $P_\lambda\left(\lfloor\frac{t}{\tau_\lambda}\rfloor\right)=0$ for $\lambda>\lambda_{max}(t)$.  

The integral for $P(t)$ directly gives an upper bound
\begin{equation}
P(t)<\frac{C_{max}}{t^2}
\end{equation}
To get the equivalent lower bound
\begin{equation}
P(t)>\frac{C_{min}}{t^2}
\end{equation}
at sufficiently large $t$ we note that $P_\lambda(N)\geq 1-Nh$ for all $N$ and integrate until this vanishes, a region of order $t^{-2}$.
This completes the proof of positive and finite limits in Theorem~\ref{th:finite}.

\subsection{Approximate piecewise linearity}\label{s:pwl}
We see from Eq.~\ref{e:P(N)} that $P_\lambda(N)$ is a piecewise linear function of $N$.  We would like to see whether this applies also to $P(t)$, which we now
approximate.  It is easy to see that Eq.~\ref{e:t2} implies
\begin{equation}
\frac{t}{\tau_\rho}-\frac{t}{\tau_{\lambda_{max}}}=\mathcal{O}(t^{-1})
\end{equation}
that is, there are at most two relevant values of $N$ at sufficiently long times.  From Eq.~\ref{e:P(N)}, $0<P_\lambda(N)< 1$ and $P_\lambda(N+1)-P_\lambda(N)\leq h$,
where $h=\mathcal{O}(t^{-1})$. Putting this together we see that
\begin{equation}
|P(t)-\tilde{P}(t)|=\mathcal{O}(t^{-3})
\end{equation}
where the approximated integral
\begin{equation}
\tilde{P}(t)=c_\rho^{-1}\int P_\rho\left(\lfloor \frac{t}{\tau_\rho}\rfloor\right) d\mu_\lambda
\end{equation}
is piecewise linear with transitions at $\tau_\rho B_{k,i}$.  This is observed numerically in Fig.~\ref{f:pwl}.

\begin{figure}
\centerline{\includegraphics[width=400pt]{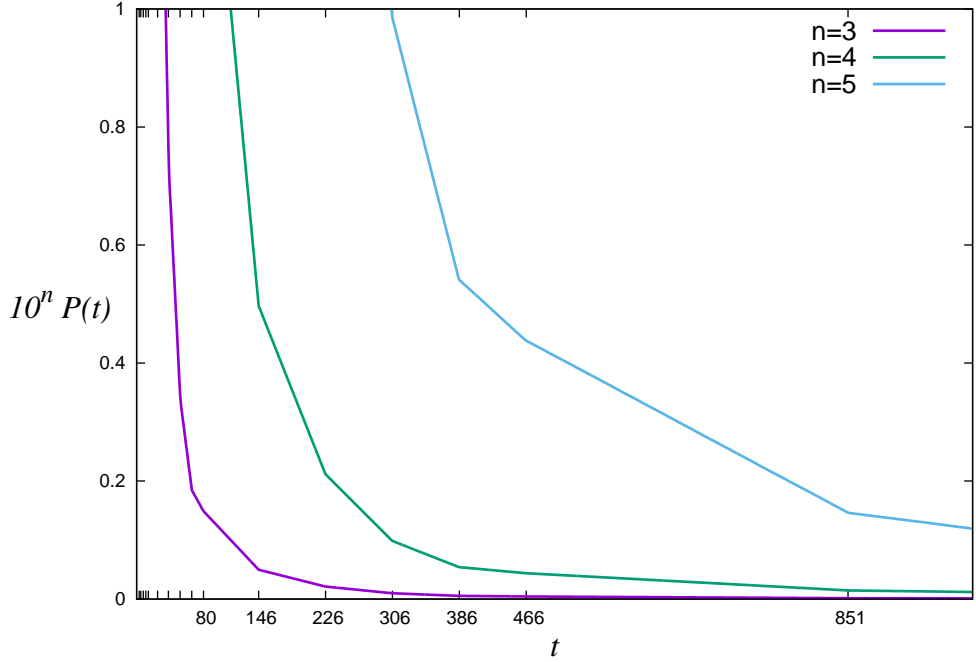}}
\caption{The survival probability is approximately piecewise linear, with pieces defined by $\tau_\rho B_{k,i}$,
here represented on the axis by the nearest integer.  As above, we choose
$r=\cos\left(\frac{5+\sqrt{2}}{23}\pi\right)$.}
\label{f:pwl}
\end{figure}

Furthermore, infinitely many transition points are spaced with ratios greater than the golden ratio, as shown in Lemma~\ref{l:bratg}, and a piecewise linear
function such as $\tilde{P}(t)$ with asymptotic form $t^{-2}$ and transition points with such spacing must have a limiting ratio of at least $\frac{32}{27}$ as
shown in Lemma~\ref{l:32/27}. From above, $|\tilde{P}(t)-P(t)|$ is at most of order $t^{-3}$ and so negligible in the limit, so the result holds for $P(t)$ as well.
This complete the proof of Theorem~\ref{th:finite}.

\subsection{Eventually periodic continued fractions}\label{s:per}
In this final section we prove Theorem~\ref{th:per}, for which the continued fraction of $\rho$ is eventually periodic.
As shown in standard texts including Ref.~\cite{RS92}, this condition is exactly that $\rho$ is a quadratic irrational.

We can write the recurrence relation Eq.~\ref{e:Brec} as
\begin{equation}
v_k=A_kv_{k-1}
\end{equation}
with
\begin{eqnarray}
v_k&=&\left(\begin{array}{c}B_{k}\\B_{k-1}\end{array}\right)\\
A_k&=&\left(\begin{array}{cc}a_k&1\\1&0\end{array}\right)
\end{eqnarray}
so that $\det A_k=-1$. Also, any product of $A$ matrices has only positive entries, so by the Perron-Frobenius theorem it has a simple positive eigenvalue of strictly
maximal magnitude.  Let $p$ be the period of the continued fraction expansion of $\rho$.  All products of $2p$ consecutive $A$ matrices in the periodic part of
the expansion have unit determinant.  They are also cyclic permutations and hence have the same eigenvalues $\beta>\beta^{-1}>0$.
Since $\beta$ and $\beta^{-1}$ are the roots of a monic quadratic polynomial with integer coefficients they are conjugate quadratic irrationals.

Now let $k$ be in the periodic part of the expansion (that is, sufficiently large, and fixed).  We have
\begin{equation}
B_k=C_k\beta^\frac{k}{2p}+\mathcal{O}(\beta^{-\frac{k}{2p}})
\end{equation}
where $C_{k+2p}=C_k$.  Noting Eq.~\ref{e:diff} and the periodicity of $\zeta_k$, we find
\begin{equation}
D_k=\tilde{C}_k\beta^{-\frac{k}{2p}}+\mathcal{O}(\beta^{-3\frac{k}{2p}})
\end{equation}
where $\tilde{C}_{k+2p}=\tilde{C}_k$, and equations with related periodic constants for $B_{k,i}$ and $D_{k,i}$.  This self-similarity of the differences can be observed in
Fig.~\ref{f:Dh} above.

We now define $t'=\beta^mt$ for some large integer $m$ and large $t$, and want to compare $\tilde{P}(t')$ with $\tilde{P}(t)$.  We see that the various quantities scale: $N'=t'/\tau_\rho=\beta^mN$. For most $t$, we have $(k',i')=(k+2mp,i)$, so we consider $D_{k+2pm,i}=\beta^{-m}D_{k,i}(1+\mathcal{O}(\beta^{-\frac{k}{p}}))$,
equating this with $h$ to find the transitions in $\lambda$, we have, using Eq.~\ref{e:h}, $\lambda'-\rho=\beta^{-2m}(\lambda-\rho)(1+\mathcal{O}(\beta^{-\frac{k}{p}}))$.  Thus
when $(k',i')=(k+2mp,i)$,
\begin{equation}\label{e:Ptilde}
|\beta^{2m}t^2\tilde{P}(\beta^mt)-t^2\tilde{P}(t)|=\mathcal{O}(\beta^{-\frac{2k}{p}})
\end{equation}
The nonsmooth points of $\tilde{P}(t)$ do not exactly scale: From above we have $\tau_\rho B_{k+2mp,i}=\beta^{m}\tau_\rho B_{k,i}(1+\mathcal{O}(\beta^{-\frac{k}{p}}))$.
This means that for $t$ close to a transition point, it is possible that $(k',i')\neq (k+2mp,i)$.  In this case we consider a value $\tilde{t}$ close to $t$ but across the transition.
We have $|\tilde{t}-t|=\mathcal{O}(\beta^{-\frac{k}{2p}})$.  Also, $t^2\tilde P(t)$ has positive upper and lower bounds and $\tilde{P}(t)$ is convex, implying that the variation satisfies
\begin{equation}
\left|\frac{\tilde{P}(\tilde{t})-\tilde{P}(t)}{\tilde{t}-t}\right|<\frac{C}{\min(\tilde{t},t)^3}
\end{equation}
so that Eq.~\ref{e:Ptilde} is still satisfied.  Finally we use $\tilde{P}(t)$ to approximate $P(t)$ and take $m\to\infty$ then $k\to\infty$ to obtain the result of Theorem~\ref{th:per}.

\section{Lemmas and their proofs}
\begin{lemma}{Monotonicity of transitions.}\label{l:mon}
A difference becomes relevant only as $\lambda$ increases.  Precisely, for a MUPO-free $r$, if there is some $(k,i)$ and $\lambda$ for which
$f(\lambda)=h-|D_{k,i}|=0$, then $f'(\lambda)>0$ at that point.
\end{lemma}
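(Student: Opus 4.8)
The claim is that whenever the hole size $h$ meets a difference magnitude $|D_{k,i}|$, the function $f(\lambda)=h-|D_{k,i}|$ crosses zero with \emph{positive} slope, i.e.\ $h$ overtakes $|D_{k,i}|$ from below. My plan is to argue by contradiction: suppose $f'(\lambda)\le 0$ at a zero of $f$, and derive the existence of a MUPO, contradicting the MUPO-free hypothesis. The two ingredients already assembled in the excerpt are exactly what I would use. First, the concavity/linearity structure recorded just before the definition of an incipient MUPO: $h$ is increasing and concave in $\lambda$, while each $D_{k,i}$ is linear in $\lambda$ (from Eq.~\ref{e:diff}, $D_{k,i}=B_{k,i}\lambda-A_{k,i}$, so $|D_{k,i}|$ has constant slope $\pm B_{k,i}$). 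Second, the characterisation of a MUPO in terms of these quantities: a MUPO exists iff there is a rational $\lambda=A/B$ with $0\le h<1/B$, equivalently if at some rational point all relevant differences exceed $h$.

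The main step is the geometric dichotomy for the intersection of a concave curve ($h$) with a straight line ($|D_{k,i}|$). Because $h$ is concave and $|D_{k,i}|$ is affine, $f=h-|D_{k,i}|$ is itself concave, so $f$ has at most two zeros and between/around them $f>0$ on the ``inside.'' At a zero with $f'\le 0$ we are at the \emph{right-hand} crossing, where $h$ is falling back below the line; just to the left of this point $h>|D_{k,i}|$, and I would track this back toward the rational endpoint of the segment for $D_{k,i}$. Recall from the three-gap discussion that each segment of $D_{k,i}$ terminates at $\lambda=A_{k,i}/B_{k,i}$, where $D_{k,i}=0$ and hence $|D_{k,i}|<h$ trivially on a neighbourhood. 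The plan is to show that the region where $h$ exceeds \emph{all} the currently relevant differences forces, at the rational point $A/B$ bounding it, the MUPO inequality $h<1/B$: at such a rational the differences are integer multiples of $1/B$, so if $h$ already dominates the smallest nonzero one, there is a rotation number $A/B$ at which the orbit closes up avoiding the hole. This is precisely the MUPO condition, contradicting MUPO-freeness.

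Concretely, the steps in order are: (i) write $f=h-|D_{k,i}|$ and record that $f$ is concave with at most two zeros, using the bulleted structural facts; (ii) assume $f'(\lambda_0)\le 0$ at a zero $\lambda_0$ and observe this is the upper of the two crossings, so $f>0$ immediately to its left; (iii) follow the sign of $D_{k,i}$ (positive for even $k$, negative for odd $k$, per Eq.~\ref{e:diff}) down to the nearest rational $A/B=A_{k,i}/B_{k,i}$ where the segment originates, where all differences are multiples of $1/B$; (iv) conclude that $h\ge 1/B$ would be needed to stay MUPO-free at that rational, but the assumed crossing geometry instead yields $h<1/B$ there, producing a MUPO and the desired contradiction. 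The step I expect to be the main obstacle is (iii)--(iv): making precise that the ``wrong-sign'' crossing can be transported to the rational endpoint of the relevant segment, and that at that endpoint the inequality genuinely flips to the MUPO condition $h<1/B$ rather than merely to $|D_{k,i}|<h$ for one index. This requires care with which differences are simultaneously relevant (the three-gap triple $|D_{k-1}|$, $|D_{k,i}|$, $|D_{k,i-1}|$) and with the case $i=1$ versus $i>1$, since the set of ``surviving'' gaps differs; I would handle this by reducing to the smallest nonzero difference at the rational point and invoking that it is exactly $1/B$ in lowest terms.
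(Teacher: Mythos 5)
Your overall strategy --- contradiction via producing a MUPO, using that $h$ is increasing and concave while $D_{k,i}$ is affine with slope $B_{k,i}$, so that $f=h-|D_{k,i}|$ is concave and a zero with $f'\le 0$ must be the upper crossing --- is exactly the paper's strategy for the even-$k$ case (for odd $k$ the paper just notes $|D_{k,i}|$ is decreasing so $f'>0$ trivially; your concavity framing covers this implicitly). But your steps (iii)--(iv), which you yourself flag as the obstacle, go in the wrong direction and toward the wrong rational, and as written the argument does not close. You propose to track back \emph{to the left} of the bad crossing, into the region $f>0$, toward the $x$-intercept $A_{k,i}/B_{k,i}$ where $D_{k,i}=0$. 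There the inequality you carry is $h>|D_{k,i}|$, which near that intercept is vacuous (the right-hand side tends to $0$) and is in any case the \emph{wrong} inequality: a MUPO at a rational $A/B$ requires $0\le h<1/B$, i.e.\ $h$ \emph{smaller} than the minimal nonzero gap, not larger. (Your sentence ``if $h$ already dominates the smallest nonzero one, there is a rotation number at which the orbit closes up avoiding the hole'' has the implication backwards, and contradicts your own step (iv).)

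The paper instead propagates to the \emph{right} of the bad crossing, into the region $f<0$. For even $k$ the segment of $D_{k,i}$ terminates at its Stern--Brocot parent $\lambda=A_{k-1}/B_{k-1}$ (not at the intercept $A_{k,i}/B_{k,i}$), where the determinant identity gives $D_{k,i}=1/B_{k-1}$. Concavity of $f$ forces $f'$ to stay negative past the crossing, so $f(A_{k-1}/B_{k-1})<0$, i.e.\ $0<h<1/B_{k-1}$ at that rational --- precisely the MUPO condition at rotation number $A_{k-1}/B_{k-1}$, giving the contradiction. To repair your write-up you need to (a) reverse the direction of propagation so that you carry the inequality $h<D_{k,i}$ rather than $h>|D_{k,i}|$, and (b) identify the correct endpoint of the segment and evaluate $D_{k,i}$ there as $1/B_{k-1}$; the case analysis over the three-gap triple and over $i=1$ versus $i>1$ that you anticipate is not actually needed.
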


\begin{proof}
If $k$ is odd, then $D_{k,i}<0$, so both terms in $f'(\lambda)$ are positive and we are finished.

In the case $k$ is even, assume that the assertion is false.  Then $D_{k,i}$ continues to higher $\lambda$ until it terminates at $\lambda=A_{k-1}/B_{k-1}$
at which it is equal to $1/B_{k-1}$ using the equations in Sec.~\ref{s:3gt}.
Since $D_{k,i}$ is a linear function of $\lambda$ and $h$ is concave, $f'(\lambda)$ remains negative, hence $f(A_{k-1}/B_{k-1})<0$.
This means that $0<h<1/B_{k-1}$ at this point, which would imply that there is a MUPO with rotation number $A_{k-1}/B_{k-1}$, a contradiction.
\end{proof}

\begin{lemma}{Size of $\lambda$ interval.}\label{l:interval}
For MUPO-free $\rho$ with bounded partial quotients, $P_\lambda\left(\lfloor\frac{t}{\tau_\lambda}\rfloor\right)=0$ for $\lambda>\lambda_{max}(t)$ where
\begin{equation}
\lambda_{max}(t)=\rho+\mathcal{O}(t^{-2})
\end{equation}
as $t\to\infty$.
\end{lemma}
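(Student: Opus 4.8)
The plan is to read off from Eq.~(\ref{e:P(N)}) that $P_\lambda(N)=0$ exactly when $N$ exceeds the escape count $B_{k-1}+B_{k,i}$, where $(k,i)$ is the escape structure attached to $\lambda$, namely the one appearing in the third line of Eq.~(\ref{e:P(N)}) and therefore satisfying $|D_{k,i-1}|>h$ (see Sec.~\ref{s:incmupo}); the MUPO-free hypothesis is precisely what guarantees that such a finite escape structure exists for every $\lambda>\rho$. It then suffices to bound the escape count above by a multiple of $1/h$, to insert the near-$\rho$ behaviour $h\sim\text{const}\cdot(\lambda-\rho)^{1/2}$ from Eq.~(\ref{e:h}), and to use that $N=\lfloor t/\tau_\lambda\rfloor$ grows linearly in $t$; the inequality $N>B_{k-1}+B_{k,i}$ then reduces to $\lambda-\rho\gtrsim t^{-2}$, and this threshold defines $\lambda_{max}(t)=\rho+\mathcal{O}(t^{-2})$.

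The crux is the upper bound on the escape count, and this is where boundedness of the partial quotients enters. From $A_kB_{k-1}-A_{k-1}B_k=(-1)^{k+1}$ and the recurrences of Sec.~\ref{s:3gt}, together with the fact that $D_{k,i-1}$ and $D_{k-1}$ have opposite signs (Eq.~(\ref{e:diff})), I would derive the cross-product identity
\[
B_{k-1}|D_{k,i-1}|+B_{k,i-1}|D_{k-1}|=1 .
\]
Since the second summand is nonnegative and $|D_{k,i-1}|>h$ at the escape structure, this gives $B_{k-1}<1/|D_{k,i-1}|<1/h$. The bound on $B_{k,i}$ then follows from Eq.~(\ref{e:Brec}): with $a_{max}=\sup_n a_n<\infty$ we have $B_{k,i}=iB_{k-1}+B_{k-2}\leq(a_{max}+1)B_{k-1}$, so the escape count obeys
\[
B_{k-1}+B_{k,i}<\frac{a_{max}+2}{h} .
\]

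To close the argument, note that for $\rho<\lambda<\rho+\delta$ (with $\delta>0$ fixed and small) the leading term of Eq.~(\ref{e:h}) dominates, so $h\geq c(\lambda-\rho)^{1/2}$ with $c>0$ fixed, while $\tau_\lambda=2\sin(\pi\lambda/2)<2$ forces $N>t/2-1$. Choose $K$ so large that $(a_{max}+2)/(cK^{1/2})<1/4$, and set $\lambda_{max}(t):=\rho+K/t^2$. For $\lambda_{max}(t)<\lambda<\rho+\delta$ and $t$ large we then have $h>cK^{1/2}/t$, hence $B_{k-1}+B_{k,i}<(a_{max}+2)/h<t/4<N$, so $P_\lambda(\lfloor t/\tau_\lambda\rfloor)=0$; for $\lambda\geq\rho+\delta$ the same conclusion is immediate since $h$ is then bounded below and the escape count is bounded while $N\to\infty$. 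This gives $\lambda_{max}(t)=\rho+\mathcal{O}(t^{-2})$.

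The main obstacle is the escape-count estimate and the uniform validity of its hypotheses: one must check that the escape structure $(k,i)$ really does satisfy $|D_{k,i-1}|>h$ for all relevant $\lambda$ (which is the content of the MUPO-free condition---no orbit avoids the hole indefinitely---) and that the sign pattern of Eq.~(\ref{e:diff}) makes the cross-product identity split into a sum of nonnegative terms, so that $B_{k-1}$ can be bounded by $1/h$ uniformly in $i$. The remaining ingredients---the expansion of $h$, the linear lower bound on $N$, and the algebra producing the threshold---are routine.
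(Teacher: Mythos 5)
Your proof is correct and follows essentially the same route as the paper: both bound $B_{k-1}<1/h$ via a difference that still exceeds $h$ at the escape structure, use the bounded partial quotients to get $B_{k-1}+B_{k,i}\leq(a_{max}+2)B_{k-1}=\mathcal{O}(1/h)$, and compare with $N>t/2-1$ and $h\gtrsim(\lambda-\rho)^{1/2}$ to obtain the $t^{-2}$ threshold. The only (cosmetic) difference is that you derive the cross-product identity $B_{k-1}|D_{k,i-1}|+B_{k,i-1}|D_{k-1}|=1$ and use $|D_{k,i-1}|>h$, whereas the paper invokes $|D_{k-2}|>h$ together with the standard bound $|D_{k-2}|\leq 1/B_{k-1}$.
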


\begin{proof}
Fix a time $t>0$ and constant $0<C_1<t^2(1-\rho)$.  If
\begin{equation}
\lambda>\rho+\frac{C_1}{t^2}
\end{equation}
we have
\begin{equation}
h>\frac{C_2}{t}
\end{equation}
using the expansion of $h$, Eq.~\ref{e:h}.  Here, $C_2$ is a constant proportional to $C_1$.
The value of $k$ is determined so that $|D_{k-1}|$ is the first difference to fall below $h$.  In particular
\begin{equation}
|D_{k-2}|>h
\end{equation}
Equation 1.4.5 of Ref~\cite{RS92} gives
\begin{equation}
|D_{k-2}|\leq\frac{1}{B_{k-1}}
\end{equation}
Thus we have
\begin{equation}
B_{k-1}<\frac{t}{C_2}
\end{equation}
Using the recurrence relation for the $B_k$, the boundedness of the $a_k$ (because there are no MUPOs
we may use the continued fraction expansion of $\rho$ rather than $\lambda$) we have 
\begin{equation}
B_{k-1}+B_{k,i}<\frac{t}{C_3}
\end{equation}
with a constant $C_3$ proportional to $C_2$.  We also have
\begin{equation}
N=\lfloor \frac{t}{\tau_\lambda} \rfloor>\frac{t}{2}-1>\frac{t}{3}
\end{equation}
since $\tau_\lambda<2$ for all $\lambda$ and $t$ can be chosen arbitrarily large.  If we choose $t$ large enough, we can choose a large enough $C_1<t^2(1-\rho)$ so
that $C_3>3$ and we find from Eq.~(\ref{e:P(N)}) that $P_\lambda(N)=0$ as required.
\end{proof}

\begin{lemma}{Ratio of semiconvergent denominators.}
For any $\lambda\not\in\mathbb{Q}$, its (infinite) sequence of semiconvergent denominators $B_{k,i}$ has ratio of successive terms at least
$g=\frac{1+\sqrt{5}}{2}$ infinitely often.\label{l:bratg}
\end{lemma}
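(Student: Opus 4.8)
The plan is to track the sequence of convergent-to-convergent ratios $q_k := B_k/B_{k-1}$, which satisfy the recurrence $q_k = a_k + 1/q_{k-1}$ and $q_k \ge 1$, and to decide, block by block, which single successive ratio inside or entering the $k$-th block of semiconvergent denominators $B_{k,1} < B_{k,2} < \cdots < B_{k,a_k}=B_k$ can be forced above $g$. Two candidates present themselves. The first is the \emph{entering} ratio $B_{k,1}/B_{k-1} = 1 + 1/q_{k-1}$, comparing the first semiconvergent of block $k$ with the last semiconvergent $B_{k-1}$ of block $k-1$. The second, available only when $a_k \ge 2$, is the first \emph{internal} ratio $B_{k,2}/B_{k,1} = 1 + 1/(1 + 1/q_{k-1})$. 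Both are genuinely successive ratios in the full increasing sequence.

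First I would record the two elementary thresholds, obtained by substituting $g^{-1} = g - 1$: the entering ratio satisfies $B_{k,1}/B_{k-1} \ge g$ exactly when $q_{k-1} \le g$, while (for $a_k \ge 2$) the internal ratio satisfies $B_{k,2}/B_{k,1} \ge g$ exactly when $q_{k-1} \ge g$. Thus these two candidates cover complementary ranges of $q_{k-1}$, and between them at least one successive ratio in block $k$ is $\ge g$ \emph{unless} we are in the single exceptional configuration $q_{k-1} > g$ and $a_k = 1$, in which block $k$ consists of the single term $B_k$ and offers only the entering ratio $q_k = 1 + 1/q_{k-1} < g$.

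The exceptional case is handled by passing to the next block. If $q_{k-1} > g$ and $a_k = 1$ then $q_k = 1 + 1/q_{k-1} < 1 + 1/g = g$, so block $k+1$ falls squarely in the entering-ratio regime, giving $B_{k+1,1}/B_k = 1 + 1/q_k > g$. Consequently, if block $k$ produces no successive ratio $\ge g$, then block $k+1$ necessarily does; the set of blocks that produce such a ratio therefore meets every pair of consecutive indices and is infinite. Since $\lambda \notin \mathbb{Q}$ the continued fraction expansion is infinite, so there are infinitely many blocks and hence infinitely many successive semiconvergent denominators with ratio at least $g$.

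The step I expect to be the main obstacle --- or at least the point most easily overlooked --- is the observation that the convergent ratio $B_k/B_{k-1}$ is itself a ratio of \emph{successive} semiconvergent denominators only when $a_k = 1$; when $a_k \ge 2$ it is a product of $a_k$ intermediate ratios, none of which need individually reach $g$ (indeed the later internal ratios $B_{k,i+1}/B_{k,i} = 1 + 1/(i + 1/q_{k-1})$ tend to $1$ as $i$ grows). The whole argument hinges on choosing the \emph{correct} intra-block ratio, namely the very first one, together with the threshold computation above; the complementary-range phenomenon is what makes the two candidates suffice except in the degenerate $a_k = 1$ case, and the latter is precisely where the bound is sharp, matching the Fibonacci expansion $[0;1,1,1,\ldots]$ for which the successive ratios oscillate about $g$.
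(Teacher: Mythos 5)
Your proof is correct, and it uses the same two candidate ratios as the paper --- the block-entering ratio $B_{k,1}/B_{k-1}=1+1/q_{k-1}$ and the first internal ratio $B_{k,2}/B_{k,1}=1+1/(1+1/q_{k-1})$ --- but organizes the case analysis along a genuinely different axis. The paper splits on the combinatorial pattern of the partial quotients (a tail of all $1$s; infinitely many $1$s adjacent to larger quotients; all but finitely many quotients exceeding $1$) and in each case exhibits one of the two ratios as at least $5/3>g$ or, in the golden-tail case, as oscillating about $g$. You instead split on the numerical value of $q_{k-1}=B_{k-1}/B_{k-2}$ relative to $g$, exploiting the exact complementarity of the two thresholds ($q_{k-1}\le g$ makes the entering ratio $\ge g$; $q_{k-1}\ge g$ makes the internal ratio $\ge g$ when it exists), and then handle the single exceptional configuration $q_{k-1}>g$, $a_k=1$ by showing it forces $q_k<g$ and hence a good entering ratio in the very next block. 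Your decomposition is arguably cleaner: it avoids any discussion of where the $1$s sit in the expansion, treats the golden-tail case uniformly rather than as a separate limit argument, gives a quantitative ``at least one good ratio in every two consecutive blocks,'' and makes transparent why the constant $g$ is sharp for $[0;1,1,1,\ldots]$. What the paper's version buys in exchange is the slightly stronger explicit bound $5/3$ in the non-golden cases, which it states but does not need. One minor edge note: your recurrence $q_k=a_k+1/q_{k-1}$ is undefined at $k=1$ since $B_{-1}=0$, but this is irrelevant to an ``infinitely often'' conclusion.
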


\begin{proof}
We have
\[ \frac{B_{k,1}}{B_{k-1}}=\frac{B_{k-1}+B_{k-2}}{B_{k-1}}=1+\frac{B_{k-2}}{B_{k-1}}=[1;a_{k-1},\ldots, a_1] \]
where the final equality comes from Ref.~\cite{RS92}, section 1.6.

If $\lambda\sim g$, that is, the $a_k$ have a tail consisting only of $1$s, we have $B_{k-2}/B_{k-1}\to g^{-1}$ alternating above and below this
value, and we are finished.

If there are are infinitely many $1$s in the partial quotients, any $k$ for which $a_{k-1}=1$ and $a_{k-2}>1$ will have $[1;a_{k-1},a_{k-2}\dots,a_1]>5/3>g$
and we are likewise finished.

If all but a finite number of partial quotients are greater than $1$, instead consider (for both $a_k>1$ and $a_{k-1}>1$)
\[ \frac{B_{k,2}}{B_{k,1}}=\frac{2B_{k-1}+B_{k-2}}{B_{k-1}+B_{k+2}}=1+\frac{1}{1+\frac{B_{k-2}}{B_{k-1}}}=[1;1,a_{k-1},a_{k-2},\ldots a_1]
>\frac{5}{3}>g \]
\end{proof}

\begin{lemma}{Minimum variation of piecewise linear functions.}\label{l:32/27}
Let $f:\mathbb{R}^+\to\mathbb{R}^+$ be piecewise linear with consecutive transition points $t_i$ having ratio at least the golden ratio infinitely often.  If the limits are positive and
finite as $t\to\infty$ then
\[ \frac{\limsup t^2f(t)}{\liminf t^2f(t)}\geq\frac{32}{27}\approx 1.185\ldots \]
\end{lemma}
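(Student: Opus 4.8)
The plan is to reduce the problem to a local analysis of a single linear piece whose endpoints have a prescribed ratio, and then to extremize. Let $f$ be piecewise linear on a piece $[t_i,t_{i+1}]$, so on that interval $f(t)=\alpha+\gamma t$ for constants $\alpha,\gamma$. Since the asymptotic hypothesis is that $t^2 f(t)$ has positive finite $\limsup$ and $\liminf$, the natural object to study is $g(t)=t^2 f(t)=\alpha t^2+\gamma t^3$ restricted to the piece. The key structural input is that infinitely many consecutive transition points satisfy $t_{i+1}/t_i\geq g$ (here $g=(1+\sqrt5)/2$ is the golden ratio, already used in the excerpt). So I would first isolate one such long piece, then show that the ratio of the maximum to the minimum of $t^2f(t)$ \emph{over that single piece} is bounded below by a number depending only on the ratio $\tau:=t_{i+1}/t_i$, and finally check that this bound is monotone increasing in $\tau$ and equals exactly $32/27$ when $\tau=g$. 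Because there are infinitely many such pieces, the global $\limsup/\liminf$ ratio is at least the per-piece bound, giving the result.

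First I would set up the extremal problem. On a piece write $t_{i+1}=\tau t_i$ with $\tau\geq g$. Because $f>0$ and the limits of $t^2f$ are positive and finite, as $t\to\infty$ the function $f$ must decay like $t^{-2}$, so on a far-out piece the dominant behaviour is governed by the balance between the constant-plus-linear form and the $t^2$ weight. The cleanest normalization is to scale $t\mapsto t/t_i$ so the piece becomes $[1,\tau]$ and $f$ is an arbitrary affine function $f(s)=\alpha+\gamma s$ that is positive on $[1,\tau]$; then $g(s)=s^2(\alpha+\gamma s)$ and I want to minimize, over all admissible $(\alpha,\gamma)$, the ratio $\max_{[1,\tau]}g/\min_{[1,\tau]}g$. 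The interior critical points of $g$ solve $2\alpha s+3\gamma s^2=0$, i.e.\ $s=-2\alpha/(3\gamma)$, so depending on the sign of $\gamma$ and the location of this critical point relative to $[1,\tau]$, the extrema of $g$ occur either at the endpoints $s=1,\tau$ or at the one interior stationary point. This is a finite case analysis.

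Second, I would carry out the optimization. The worst case (smallest ratio) should be a decreasing $f$ (so $\gamma<0$) with the interior maximum of $g$ landing inside $[1,\tau]$, because that is precisely the configuration matching the picture of $\tilde P(t)$ near a transition. Setting $\gamma<0$, the stationary point $s^*=-2\alpha/(3\gamma)$ gives a local max of $g$; the minimum of $g$ on the piece is then at an endpoint, and to make the ratio as small as possible one pushes $s^*$ and the endpoint configuration to their tightest feasible positions subject to $\tau=g$. I expect the extremal affine function to be the one that vanishes right at $s=\tau$ (so $f(\tau)=0$, i.e.\ the piece ends exactly where $P$ reaches zero, consistent with Eq.~\ref{e:P(N)}), which fixes $\alpha=-\gamma\tau$ and leaves a one-parameter family whose ratio I minimize over $\tau\geq g$. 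Plugging $\tau=g$ and the Fibonacci relation $g^2=g+1$ should collapse the algebra and produce exactly $32/27$; the factor $27=3^3$ is the signature of differentiating the cubic $g(s)$ and evaluating at $s^*=\frac23\cdot(\text{something})$, and $32$ comes out of the endpoint value.

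The main obstacle is the case analysis around where the extrema of $g$ sit: I must verify that the minimizing configuration really is the degenerate one with $f(\tau)=0$ and interior maximum, rather than some boundary case with $f$ increasing or with both extrema at endpoints, and I must confirm the resulting ratio-as-a-function-of-$\tau$ is monotone so that $\tau=g$ (the \emph{smallest} guaranteed ratio) yields the \emph{smallest} value $32/27$ and hence a valid lower bound for all longer pieces. Handling positivity of $f$ on the whole piece while allowing $f(\tau)\to 0$ requires a short limiting argument, since strictly $f>0$; I would take a sup over admissible $f$ and note the extremum $32/27$ is approached but the inequality is of $\geq$ type, so the limiting degenerate piece is permissible. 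Once the single-piece bound is established, passing to $\limsup/\liminf$ is immediate: each of the infinitely many qualifying pieces forces $\limsup t^2f\geq (32/27)\liminf t^2f$ in the limit.
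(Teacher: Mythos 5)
Your overall strategy coincides with the paper's: isolate one piece $[t_1,t_2]$ with $t_2/t_1=G\geq g$, study the cubic $t^2(\alpha+\gamma t)$ on it, and do a case analysis over where its extrema (the two endpoints and the one interior stationary point) lie. The genuine gap is your identification of the extremal configuration. You propose that the minimizing affine piece is the degenerate one with $f(t_2)=0$. That configuration is both inadmissible and fails to produce $32/27$. It is inadmissible because $f(t_2)=0$ makes the minimum of $t^2f$ over the piece zero, so the max/min ratio on that piece is infinite rather than minimal, and if such pieces occurred at arbitrarily large $t_2$ the hypothesis $\liminf t^2f>0$ would be violated. If instead you compare the interior maximum to the left endpoint in that configuration, normalizing $t_1=1$ gives $f(s)=|\gamma|(\tau-s)$, interior maximum $|\gamma|\,4\tau^3/27$ at $s=2\tau/3$, left endpoint $|\gamma|(\tau-1)$, hence a ratio $\frac{4\tau^3}{27(\tau-1)}$, which at $\tau=g$ equals $\frac{4g^4}{27}=\frac{4(3g+2)}{27}\approx 1.015<\frac{32}{27}$. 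So the optimization as you have set it up would land \emph{below} the claimed bound, not at it.

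The correct extremal configuration, which the paper's case analysis locates, is the balanced one: with $z=f(t_1)/(\gamma t_1)$ (here $\gamma>0$ is the magnitude of the slope), the minimum of the max/min ratio occurs at $z=G^2/(G+1)$, which for $G=g$ gives $z=1$ and hence $t_1^2f(t_1)=t_2^2f(t_2)$; the interior maximum at $t_c=\frac{2}{3}t_1(1+z)$ then exceeds this common endpoint value by exactly $\frac{4}{27}\frac{(1+z)^3}{z}=\frac{32}{27}$. You would also need to dispatch the cases you only gesture at: turning point outside the interval (where the paper shows $R/L\geq g^3/(3g-2)>32/27$), and $R\geq L$ versus $R<L$ when the turning point is inside (controlled by $\frac{4}{27}\frac{(1+z)^3}{z}$ and $\frac{4}{27}\frac{(G^2+G+1)^3}{G^2(G+1)^2}$ respectively, both monotone in the right direction and equal to $32/27$ at the extremal parameters). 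Your reduction of the global $\limsup/\liminf$ statement to a per-piece bound over infinitely many qualifying pieces, and your observation that increasing pieces are harmless (they give ratio at least $G^2\geq g^2>32/27$), are both fine.
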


\begin{proof}
According to the assumptions we can find infinitely many consecutive transition points (two of which denoted $t_1,t_2$) such that $t_2/t_1\geq g$.  Within such an interval, writing $t_2/t_1=G$,
\[ t^2f(t)=t^2\left[\beta-\gamma(t-t_1)\right] \]
where
\[ \beta=f(t_1)>0,\qquad \gamma=-\frac{f(G t_1)-f(t_1)}{(G-1)t_1}>0 \]
are constant.  Possible supremum/infimum points consist of the left and right endpoints, and a turning point:
\begin{eqnarray*}
L&=&t_1^2f(t_1)=\beta t_1^2\\
T&=&t_c^2f(t_c)=\beta t_1^2 \frac{4}{27z}(1+z)^3,\qquad \mbox{if }t_c=t_1\frac{2}{3}(1+z)\in(t_1,G t_1)\\
R&=&t_2^2f(t_2)=\beta t_1^2G^2\left[1-\frac{G-1}{z}\right]
\end{eqnarray*}
where $z=\beta/(\gamma t_1)$.
The turning point is relevant if
\[ \frac{1}{2}<z < \frac{3G}{2}-1 \]
First we turn to cases where it is not relevant.
For $z<1/2$ we find, using $G\geq g$, that $R<0$, which is impossible.  For $z>3G/2-1$ we compute
\[ \frac{R}{L}=G^2\left[1-\frac{G-1}{z}\right]>G^2\left[1-\frac{G-1}{3G/2-1}\right]=\frac{G^3}{3G-2} \]
The derivative of the right hand side is positive for $G\geq g$.  Thus we conclude
\[ \frac{R}{L}\geq \frac{g^3}{3g-2}>\frac{32}{27} \]
as required.  This completes the cases where the turning point is not relevant.

When the turning point is relevant, we have two cases.  When $z\geq\frac{G^2}{G+1}$ we have $R\geq L$, and so the relevant ratio is
\[ \frac{T}{L}=\frac{4}{27}\frac{(1+z)^3}{z}\geq \frac{32}{27} \]
since we know that $z\geq \frac{G^2}{G+1}\geq \frac{g^2}{g+1}=1$.
Conversely, when $z\leq \frac{G^2}{G+1}$, the relevant ratio is
\[ \frac{T}{R}=\frac{4}{27}\frac{(1+z)^3}{G^2(z+1-G)}\]
Differentiating (and keeping $G$ constant) we have
\[ \frac{d}{dz}\frac{T}{R}=\left(\frac{z+1}{G(z+1-G)}\right)^2\left(2(z+1)-3G\right)<0 \]
since $z<3G/2-1$ in order for $T$ to be relevant, as above.  Thus its minimum value is obtained at $z=\frac{G^2}{G+1}$:
\[ \frac{T}{R}\geq \frac{4}{27}\frac{(1+\frac{G^2}{G+1})^3}{G^2(\frac{G^2}{G+1}+1-G)}
=\frac{4}{27}\frac{(G^2+G+1)^3}{G^2(G+1)^2}
\]
This function is increasing for $G\geq g$, thus we find
\[ \frac{T}{R}\geq \frac{4}{27}\frac{(g^2+g+1)^3}{g^2(g+1)^2}=\frac{32}{27} \]
as required.
\end{proof}

\bibliographystyle{amsplain}
\bibliography{billiards}

\end{document}